\renewcommand\footnotetextcopyrightpermission[1]{}
\algrenewcommand\algorithmicindent{0.5em}
\newtheorem{definition}{Definition}
\newcommand{\negl}{\operatorname{negl}}
\newtheorem{proposition}{Proposition}
\newtheorem{property}{Property}
\newtheorem{assumption}{Assumption}
\algnewcommand{\LineComment}[1]{\Statex \hskip\ALG@thistlm \(\triangleright\) #1}
\newcommand{\cosmeticbf}{\textbf{CoSMeTIC}}
\newcommand{\cosmetic}{CoSMeTIC }
\begin{document}

\title{CoSMeTIC: Zero-Knowledge Computational Sparse Merkle Trees with Inclusion-Exclusion Proofs for Clinical Research}

\author{Mohammad Shahid}
\email{mohshai@okstate.edu}

\author{Paritosh Ramanan}
\authornote{Corresponding author. Under review at ACM Transactions on Privacy and Security}
\email{paritosh.ramanan@okstate.edu}

\affiliation{%
  \department{School of Industrial Engineering and Management}
  \institution{Oklahoma State University}
  \city{Stillwater}
  \state{Oklahoma}
  \country{USA}
}

\author{Mohammad Fili}
\email{mfili@gmu.edu}

\author{Guiping Hu}
\email{ghu4@gmu.edu}

\affiliation{%
  \institution{George Mason University}
  \department{Department of Systems Engineering and Operations Research}
  \city{Fairfax}
  \state{Virginia}
  \country{USA}
}

\author{Hillel Haim}
\email{hillel-haim@uiowa.edu}

\affiliation{%
  \institution{University of Iowa}
  \department{Department of Microbiology and Immunology}
  \city{Iowa City}
  \state{Iowa}
  \country{USA}
}

\begin{abstract}
Analysis of clinical data is a cornerstone of biomedical research with applications in areas such as genomic testing and response characterization of therapeutic drugs. Maintaining strict privacy controls is essential because such data typically contains personally identifiable health information of patients. At the same time, regulatory compliance often requires study managers to demonstrate the integrity and authenticity of participant data used in analyses. Balancing these competing requirements of privacy preservation and verifiable accountability remains a critical challenge. In this paper, we present CoSMeTIC, a zero-knowledge computational framework that proposes computational Sparse Merkle Trees (SMTs) as a means to generate verifiable inclusion and exclusion proofs for individual participants' data in clinical studies. We formally analyze the zero-knowledge properties of CoSMeTIC and evaluate its computational efficiency through extensive experiments.  We demonstrate the framework on Huntington's disease and HIV-1 case studies, using simulated CAG-repeat cohorts derived from published summary statistics and published de-identified clinical lab measurements of virus samples. Using two-sample Kolmogorov-Smirnov and likelihood-ratio hypothesis tests, along with logistic-regression-based genomic analyses on the de-identified datasets, we show that CoSMeTIC achieves strong privacy guarantees while maintaining statistical fidelity.   Our results suggest that CoSMeTIC provides a scalable and practical alternative for achieving regulatory compliance with rigorous privacy protection in large-scale clinical research.
\end{abstract} 
\keywords{Zero Knowledge Proof, Sparse Merkle Trees, Regulatory Compliance, Clinical Research Studies.}

\maketitle

%



\section{Introduction}
Clinical research studies rely on a wide variety of statistical tools and techniques to analyze patient outcomes so as to address the underlying biomedical research questions. Critically, these research questions can potentially help in deciphering the propensity of people to develop certain diseases \cite{MengZhang2024,Qiu2025}; formulate statistically robust strategies for their disease prevention \cite{Guerra2025,Bi2022}; as well as characterize or benchmark responses to therapeutic drugs on a wide demographic base \cite{Huang2018,Xie2017}. From a computational perspective however, the underlying techniques could potentially involve using statistical tests to help guide the core clinical research hypothesis \cite{Shih2017,Rojas2024}; or utilize comparative benchmarking of different statistical models on varying demographic subgroups \cite{Ward2020}. Therefore, computational correctness and data accountability are the two fundamental aspects that govern the integrity of the underlying statistical technique in clinical research studies \cite{Billah2025,Kelly2019}. Transparently asserting computational correctness and data accountability involves patient privacy risks \cite{Curzon2021,Malin2010}, significant audit burdens on regulating authorities \cite{Falco2021} as well as an inability to reliably claim patient exclusion \cite{liu2021evaluating,food2024evaluating}. In this paper, we develop \cosmeticbf, a novel framework to assert computational correctness and user data privacy and accountability in clinical research studies using zero knowledge driven computational sparse Merkle trees (CSMTs).

The \cosmetic framework is primarily designed to resolve the three important critical gaps that exist in state of the art clinical research studies. First, ensuring the computational correctness of the study outcome becomes an essential precursor to guarantee the integrity of the clinical study. We define correctness in terms of the order as well as the precision of each computational step that belies the underlying statistical method. Asserting computational correctness in a transparent fashion can potentially help reduce the audit burden on regulatory bodies (such as the U.S. FDA), especially in large scale clinical studies that are geographically spread out. Second, the clinical study must also guarantee data accountability by demonstrating that the study outcomes are based on genuine patient data records. As part of the data accountability argument, the clinical study manager must also be capable of providing individual patients an inclusion or exclusion receipt pertaining to the use of their respective data records. Lastly, the entire computational graph from the raw user data to the final statistical outcome must be publicly verifiable in a transparent fashion, without the need to divulge identities of the patients or the users. 





The \cosmetic framework is specifically designed to address problems of computational correctness, patient data privacy and accountability as well as transparency for regulatory authorities. For ensuring correctness, we first decompose the statistical technique into individual computational steps. Consequently, the \cosmetic framework leverages computational reductions (CR) which are regarded as the basic building blocks for statistics and data analytics \cite{dean2008mapreduce,dongarra1995introduction}. CR primitives have been widely used for carrying out large-scale data science and analytics pipelines and are the core drivers behind large scale compute frameworks such as Apache Spark \cite{zaharia2010spark} as well as Message Passing Interface (MPI) \cite{dongarra1995introduction}. As a result, a series of CR primitives can be used to formally map each computational step involved in a statistical technique recursively from raw user data to the final statistical output. In this paper, we leverage CSMTs to represent each CR operation involved in the statistical method to enable membership proofs for distinct computational steps.


Conventionally, plain sparse Merkle Trees (SMTs) are defined on the basis of leaf space cardinality which must be equivalent to the output bit space of the hashing algorithm employed to construct the tree. By notionally equating the set of potential leaves with the set of total possible outcomes of a hash algorithm, an SMT can deliver inclusion as well as exclusion proofs for individual data records \cite{tzialla2021transparency,kurbatov2024imok}. However, the conventional SMT architecture relies on simple hash concatenation applied at each level of the tree which is fundamentally incompatible with the CR operations that seek to aggregate data across the entire clinical user dataset. As a result, it becomes significantly challenging to provide a verifiable computational trace regarding the statistical methodology of clinical studies while retaining user privacy and clinical audit efficiency.

In order to alleviate the CR-oriented limitations of the Merkle tree architecture, we propose the CSMT architecture that promises the potential of delivering user membership proofs with respect to individual computational steps themselves. In \cosmetic each CSMT represents one CR step with the corresponding primitive applied at each level of the Merkle tree. As a result, the \cosmetic framework enables user inclusion and exclusion proofs at each CR step of the statistical technique used in the clinical study. To induce transparency, we augment the CSMT architecture with publicly verifiable, zero-knowledge proofs that can assert the integrity of the data transformation and aggregation with complete privacy of the individual user. These zero-knowledge proofs rely on the zk-SNARK (Succinct, Non-interactive, Argument of Knowledge) paradigm and help assert the overall integrity of the entire statistical technique, capturing every transformation of raw user data records among a series of zk-SNARKs.

The \cosmetic framework delivers key benefits for both regulators and patients by addressing three critical gaps relating to computational correctness, data accountability, and privacy-preserving transparency. For regulators, \cosmetic enables efficient public verification of the correct execution of statistical methods underlying clinical study claims, significantly reducing audit burdens and potentially accelerating regulatory clearance. For patients and data contributors, the framework provides strong data governance guarantees through verifiable inclusion receipts and exclusion proofs, ensuring consent and accountability. Finally, publicly verifiable zk-SNARK proofs for leaf transformations and Merkle tree traversal offer transparent, end-to-end visibility into the study’s computational pipeline, allowing patients to understand precisely how their data was used and enabling regulators to pinpoint the exact role of individual records within the overall analysis.

Our work showcases the efficacy of the \cosmetic framework with respect to different statistical methods as well as real world case studies.  We demonstrate the framework on illustrative datasets, Huntington's disease CAG-repeat distributions and HIV-1 Temsavir resistance measurements, chosen to represent common genomic and virus resistance analysis workflows.   Using these datasets we formulate clinical research studies involving the Kolmogorov-Smirnov (KS) statistical hypothesis tests as well as accuracy and Likelihood Ratio tests (LRT) to benchmark the predictive performance and fitting quality of logistic regression models, respectively. Our contributions can be summarized as follows:
\begin{itemize}[topsep=2pt, itemsep=2pt, parsep=0pt]
    \item We develop a Computational Sparse Merkle Tree (CSMT) module that integrates CR primitives with the Merkle tree architecture that is capable of providing inclusion and exclusion proofs for individual computational steps as a means for privacy and accountability.
    \item We build a zk-SNARK driven framework that encodes the entire graph of CR primitives in zero-knowledge providing end-to-end transparency of the clinical study.
    \item We provide theoretical security guarantees regarding the robustness of the zero-knowledge framework that delivers patient data privacy, accountability and transparency.
    \item We demonstrate the performance of the framework on a wide range of statistical methods involving the KS hypothesis test as well as the LRT and accuracy tests for logistic regression applied towards the clinical datasets.
\end{itemize}

We now proceed to summarize the related work pertaining to SMTs in addition to a brief review of current state of the art methods in the domain of verifiable statistics.

\section{Related Work}
In recent years, there are strong emerging research trends pertaining to verifiable statistics. Custodes \cite{servan2018cryptographically} certifies classical hypothesis tests (e.g., $t$-tests, $\chi^2$ tests, ANOVA) by logging encrypted test evaluations and issuing cryptographically certified $p$-values, thereby deterring $p$-hacking. Verifiable differential privacy (VerDP) \cite{narayan2015verdp} and its follow-ups use zero-knowledge proofs to show that randomized mechanisms (e.g., Laplace noise for counts) have been applied correctly to sensitive data before releasing aggregate statistics.  RiseFL~\cite{zhu2024risefl} embeds a $\chi^2$-style goodness-of-fit test into a low-cost SNARK circuit to verify that federated-learning updates satisfy certain norm bounds, and ElectionGuard-based protocols~\cite{lonfils2023electionguard} combine per-ballot zero-knowledge proofs with risk-limiting audits to statistically validate election outcomes. 

While all of these works focus on verifiable statistics, they lack the ability to account for datum-level accountability. These systems demonstrate that hypothesis tests and related statistical procedures can be implemented inside zero-knowledge circuits, but they treat the underlying dataset as a monolithic object. The cryptographic guarantees apply to the correctness of the \emph{global} test statistic or mechanism, not to verifiable claims about whether any specific individual’s datum was included or excluded from the analysis. 

Existing work that targets privacy-preserving user-level data accountability, has made significant progress but only in terms of user-data oriented state accountability. Transparency dictionaries such as Verdict \cite{tzialla2021transparency} introduce indexed Merkle trees that support succinct inclusion and non-inclusion proofs over a $2^{256}$-sized key space and can be combined with SNARK backends for lazy, on-demand verification.  IMOK \cite{kurbatov2024imok} extends sparse Merkle trees (SMTs) with non-prohibition proofs for sanction lists, while Cartesian Merkle Trees (CMTs) \cite{chystiakov2025cmt} and related constructions optimize membership and non-membership proofs for blockchain and rollup settings. Recent compliance-oriented systems, such as private smart wallets with probabilistic compliance~\cite{rizzini2025private}, likewise use SMTs to certify that a given account is (or is not) contained in a prohibited set. 

However, in all of these designs, Merkle leaves are treated as \emph{static key–value records}, and the associated zero-knowledge circuits implement only lookup, append-only, or simple boolean predicates. None of these systems implements a \emph{computational} SMT or Merkle-sum SMT that is capable of supporting arbitrary per-leaf transformations and numeric reductions over selected leaves. Further, they lack the capability of providing inclusion and exclusion proofs with respect to the aggregated reductions over individual datum belonging to specific users.  Proof-of-solvency protocols based on Merkle-sum trees~\cite{snarkedsummtree} do realize aggregation semantics at internal nodes, but typically over dense trees, without published non-membership circuits, and in financial rather than statistical-testing settings. Therefore, there exists a critical gap that needs to be bridged between the \emph{state accountability} of the data, and the \emph{analysis accountability} at the granularity of individual datum that is required in clinical studies. Additionally, for implementing these missing capabilities, we need novel methodologies that scale to sparse $2^{256}$ key spaces while keeping proof generation lazy and participant-driven.


Taken together, the two strands of verifiable statistic and analysis accountability leave a critical gap for privacy-sensitive scientific domains such as clinical studies. On one side, SMT-based ZK frameworks provide efficient inclusion and non-inclusion proofs but are largely agnostic to the statistical computations performed over the committed dataset, and they lack Merkle-sum semantics tailored to complex tests.  On the other side, ZK-based verifiable statistics frameworks ensure that certain hypothesis tests are executed correctly, but they do not cryptographically bind the test inputs to a fine-grained, participant-auditable commitment structure. In particular, they do not offer a way for a participant to ask, \emph{ex post}, “Was my data used in this KS test, likelihood-ratio test, or regression model?” and receive a provable inclusion or non-inclusion proof tied to that exact computation. As a result, there is a clear lack of supporting methodologies that can provide individual-level data accountability with provable membership guarantees in aggregations over pre-defined user datasets. 

  Specifically in the domain of membership, there are two particular approaches, ELEKTRA \cite{len2023elektra} and PARAKEET \cite{malvai2023parakeet} that provide the capability to authenticate with respect to public-key directories using variants of Merkle trees. However, these approaches are applicable to end-to-end encrypted communication paradigms. Consequently, these frameworks serve as authentication systems which cannot be used for demonstrating membership semantics across computational aggregation steps that are commonly found across a wide range of statistical approaches. Similarly, BalanceProofs \cite{wang2023balanceproofs} aggregates \emph{proofs} over a vector commitment to enable fast batch verification in proof-of-solvency settings; it operates over dense trees, offers no non-membership proofs, and aggregates committed proof artifacts rather than the underlying data values themselves. However, the BalanceProof approach does not certify exclusion from specific statistical pipelines which would explicitly require support for aggregation semantics over arbitrary operations. 

To the best of our knowledge, there is no existing framework that (i) instantiates a \emph{computational} sparse Merkle tree in which internal nodes encode analysis-specific reductions (e.g., transformed sufficient statistics for KS or likelihood-ratio tests), (ii) supports both inclusion and exclusion proofs for individual participants with respect to a given statistical pipeline, (iii) provides a scalable extensible framework for applications with a large number of aggregation operations, and (iv) demonstrates these guarantees on real-world clinical datasets without degrading the fidelity of downstream hypothesis tests. CoSMeTIC is designed to precisely fill this gap by combining SMT-based, participant-centric accountability with zero-knowledge implementations of classical hypothesis tests and regression models used in large-scale clinical research.



\section{Problem Formulation}
We consider a group of $n$ users denoted by $U = \{u_1, u_2, \ldots, u_n\}$ wherein each user $u_i$ possesses the raw datum $\delta_i \in \mathbb{R}^d$ leading to a user group dataset denoted by $\Delta_U$ as denoted in Equation \eqref{eq:raw_ug}.
\begin{gather}\label{eq:raw_ug}
    \Delta_U = \Big[\delta_1\ldots \delta_n\Big]
\end{gather}
Further, we consider a computational scheme consisting of a leaf transformation function denoted by $\mathcal{L}$ and an aggregator function denoted by $\mathcal{A}$ parametrized by parametrized by $\theta_{\mathcal{L}},\theta_{\mathcal{A}}$ respectively as defined in Equations \eqref{eq:r_ltr}, \eqref{eq:r_agg}. Their composite can be denoted as $\mathcal{A} \circ \mathcal{L}$ as defined in Equation \eqref{eq:r_ltr_agg_combined} respectively.
\begin{gather}
    \mathcal{L}:\mathbb{R}^{d}\mapsto \mathbb{R}^p \label{eq:r_ltr}\\
    \mathcal{A}:\mathbb{R}^{p\times n} \mapsto \mathbb{R}^p \label{eq:r_agg}\\
    \mathcal{A} \circ \mathcal{L}: \mathbb{R}^{d\times n} \mapsto \mathbb{R}^p \label{eq:r_ltr_agg_combined}
\end{gather}
The leaf transformation $\mathcal{L}$ is geared towards transformation of raw datum of individual users leading to a user group specific transformed dataset $\Delta^{\mathcal{L}}_U$ as denoted in Equation \eqref{eq:ltr_set}. On the other hand, the application of the aggregation $\mathcal{A}$ on set $\Delta^{\mathcal{L}}_U$ results in a reduction denoted by $R(\mathcal{A},\mathcal{L},U)$ as given in Equation \eqref{eq:global_agg_ltr}. 
\begin{gather}
    \Delta^{\mathcal{L}}_U = \big[\mathcal{L}(\delta_1;\theta_{\mathcal{L}}),\mathcal{L}(\delta_2;\theta_{\mathcal{L}}), \ldots \mathcal{L}(\delta_n;\theta_{\mathcal{L}})\big] \label{eq:ltr_set}\\
    R(\mathcal{A},\mathcal{L},U) = \mathcal{A}\Big(\Delta^{\mathcal{L}}_U;\theta_{\mathcal{A}}\Big) \label{eq:global_agg_ltr}
\end{gather}
The reduction $R(\mathcal{A},\mathcal{L},U)$ represents the aggregated outcome (e.g., statistical summary, model update, etc.) computed from the transformed dataset.

In order to formalize the objective of clinical stakeholders, we first define computational membership using function $\mathcal{M}$ as presented in Definition \ref{def_mf}. 
\begin{definition}[Computational Membership]\label{def_mf}
Given an arbitrary user identity $\tilde{u}$ with raw datum $\tilde{\delta}$, the membership function can be defined by $\mathcal{M}$ 
\[
\mathcal{M}([\tilde{u},\tilde{\delta}]|R) = \begin{cases}
        1 , \ \text{if } \ \mathcal{L}(\tilde{\delta}) \in \Delta^{\mathcal{L}}_U \text{ and } R = \mathcal{A}\Big(\Delta^{\mathcal{L}}_U\Big)\\
        0, \ \text{otherwise}
    \end{cases}\label{eq:mem}
\]
where  $U$, $\Delta_U$ denote a pre-defined user identity set and its corresponding user group dataset for a given leaf transform $\mathcal{L}$ and an aggregation function $\mathcal{A}$
\end{definition}
With the help of Definition \ref{def_mf}, we formally state the prover objective using Definition \ref{def_pf} for an arbitrary user identity $\tilde{u}$ with a datum corresponding to $\tilde{\delta}$.
\begin{definition}[Clinical Stakeholder Objective]\label{def_pf}
Given a user identity set $U$, its corresponding user group dataset $\Delta_U$ and the resulting global reduction $R(\mathcal{A},\mathcal{L},U)$ the clinical stakeholder objective is to provide a publicly verifiable proof of membership or non-membership of an arbitrary user identity $\tilde{u}$ by evaluating the membership function $\mathcal{M}([\tilde{u},\tilde{\delta}]|R(\mathcal{A},\mathcal{L},U))$ without disclosing $\Delta_U$ or any other intermediate transformations.
\end{definition}

Definition \ref{def_pf} provides a fundamental overview of the task of the clinical stakeholder which is responsible of handling patient datasets. Regulatory mandates would encumber the clinical stakeholder to demonstrate the use of data pertaining to specific individuals. For the regulator, this would help enforce compliance requirements regarding the integrity of the underlying datasets used in clinical trials.


\subsection{User Identity Management}
For implementing the SMT, we must first ensure that raw user datum is uniquely coupled with their corresponding identity. Additionally, depending on the nature of $\mathcal{L}$, there is a non-trivial probability of the user salted data of two users mapping to the same transformed output. Therefore, we must also ensure that transformed outputs are uniquely distinguishable across different users.

\noindent\textbf{Asserting User Identity}: We rely on user salts to uniquely assert the identity of the user as well as to enable them to demonstrate ownership of their datum as elaborated in Property \ref{prop:user_identity}. 
\begin{property}\label{prop:user_identity}
The identity of a user $u_i\in U$ can be uniquely bound to their corresponding datum $\delta_i$ through a secret salt vector $\mu_i\in \mathbb{R}^{s_u}$. 
\end{property}
\noindent Specifically, for each user identity $u$, the corresponding datum is concatenated with a user salt $\mu\in\mathbb{R}^{s_u}$ leading to $\delta^s \equiv (\delta,\mu)$. The user salt $\mu$ is a secret vector of  $s_u$ dimension uniquely held only by the user and must not be revealed or made public. As a consequence we obtain a salted user group dataset denoted by $\Delta^{s}_U$ in Equation \eqref{eq:salted_ug}. 
\begin{gather}
    \Delta^{s}_U = \Big[(\delta_1,\mu_1),(\delta_2,\mu_2)\ldots (\delta_n,\mu_n)\Big]\label{eq:salted_ug}\\
    \hat{\mathcal{L}}:\mathbb{R}^{{d+s_u}}\mapsto\mathbb{R}^{p} \label{eq:i_mod_ltr}
\end{gather}
The unique salt binding for each user ensures that each record in $\Delta^{s}_U$ is uniquely attributable to its originating user. In order to handle the transformations of the salted user datum, we modify the leaf transform input dimension to yield $\hat{\mathcal{L}}$ parametrized by $\theta_{\hat{\mathcal{L}}}$ as denoted in Equation \eqref{eq:i_mod_ltr}.

\noindent\textbf{Asserting Uniqueness of Leaf Transformation}: For enforcing distinguishability between transformed outputs of different users, we concatenate a user specific transform salt to the leaf transform output of salted user datum as described by Property \ref{prop:leaf_identity}. 
\begin{property}\label{prop:leaf_uniquenes}
The leaf transformation of the salted user datum $(\delta_i,\mu_i)$ of user $u_i\in U$ can be made uniquely distinguishable across user set $U$ by binding the outputs of the modified leaf transform function $\hat{\mathcal{L}}(\delta_i,\mu_i)$ with a secret transform salt vector $\tau_i\in \mathbb{R}^{s_t}$.
\end{property}
\noindent Specifically, for a given user $u$ with raw datum and user salt represented by $\delta,\mu$ respectively, we incorporate $\tau\in \mathbb{R}^{s_t}$ such that Equation \eqref{eq:salted_func_ltr} holds.
\begin{gather}
    \mathcal{L}^{s}\big([\delta,\mu,\tau];\theta_{\mathcal{L}^s}\big) \;=\; [\,\hat{\mathcal{L}}(\delta,\mu;\theta_{\hat{\mathcal{L}}}), \tau\,], \quad \tau \in \mathbb{R}^{s_t} \label{eq:salted_func_ltr}
\end{gather}

Consequently, we obtain a user group specific salted leaf transform dataset pertaining to set of users $U$ as denoted by $\Delta^{\mathcal{L}^s}_U$ in Equation \eqref{eq:salted_ltr_final}.
\begin{gather}
        \Delta^{\mathcal{L}^s}_U = \big[\mathcal{L}^s([\delta_1,\mu_1,\tau_1];\theta_{\mathcal{L}^s}), \ldots \mathcal{L}^s([\delta_n,\mu_n,\tau_n];\theta_{\mathcal{L}^s})\big] \label{eq:salted_ltr_final}
\end{gather}

As a result of using the transform salt, the output dimension of the leaf transform will also change effectively changing the dimensions of the aggregator as well. Therefore, the modified leaf transform and aggregator function can be denoted would acquire the mappings represented in Equations \eqref{eq:salted_ltr}-\eqref{eq:salted_ltr_agg_combined}.
\begin{gather}
    \mathcal{L}^{s}:\mathbb{R}^{d+s_u}\mapsto \mathbb{R}^{p+s_t} \label{eq:salted_ltr}\\
    \mathcal{A}^{s}:\mathbb{R}^{(p+s_u)\times n} \mapsto \mathbb{R}^{p+s_t} \label{eq:salted_agg}\\
    \mathcal{A}^{s} \circ \mathcal{L}^{s}: \mathbb{R}^{(d+s_u)\times n} \mapsto \mathbb{R}^{p+s_t} \label{eq:salted_ltr_agg_combined}
\end{gather}

\subsection{Computational Sparse Merkle Trees}
We consider Sparse Merkle Trees (SMT) as a means to represent the computational graph pertaining to the global aggregation function $\mathcal{A}^s$ parametrized by $\theta_{\mathcal{A}^s}$. SMTs are a sparse version of conventional Merkle trees which rely on a well-formed hash function formally as defined in Definition \ref{defn2}.
\begin{definition}[Hash Function]\label{defn2}
A hash function denoted by \verb|hash| can be defined such that it can consume arbitrary length inputs so as to map to fixed length outputs of size $K$
\[
\verb|hash| : \{0,1\}^* \rightarrow \{0,1\}^K
\]
\end{definition}
 Specifically in the context of Definition \ref{defn2}, we assume \verb|hash| is a cryptographic hash function satisfying collision and preimage resistance. We further assume the user and transform salts $\mu_u$ and $\tau_u$ carry sufficient entropy that the space of possible inputs to $\mathsf{hash}(\mathcal{L}^s(\delta_u, \mu_u, \tau_u))$ is too large for an adversary to enumerate by brute force, even if the underlying participant attributes $\delta_u$ are drawn from a small or otherwise guessable set of values. Together, these two properties ensure that an adversary observing $\mathsf{hash}(\mathcal{L}^s(\delta_u, \mu_u, \tau_u))$ can neither invert the hash directly, by preimage resistance, nor recover $\delta_u$ through exhaustive search over candidate inputs, by the entropy contributed by the salts. Consequently, we assume that the hashed leaf transforms and aggregation inputs do not reveal the underlying participant data to a computationally bounded adversary. 



Sparse Merkle Trees (SMTs) differ from conventional Merkle trees in both tree height and the number of leaf nodes. For a hash function producing an output of $K$ bits, a traditional Merkle tree has a variable height that depends on the number of data elements included as leaves. In contrast, an SMT consists of $2^K$
leaves resulting in a fixed tree height of $K$. Given a set of data elements, the construction of an SMT relies on determining the hash string of each element. The binary representation of each hash string determines the leaf position of the corresponding data element while the unfilled leaf positions are characterized by the hash of a default element. 

 Starting from the leaves, the conventional version of the SMT is constructed by recursively hashing the concatenation of hash strings of two adjacent elements ultimately culminating in the root element. Unlike conventional Sparse Merkle Trees (SMTs), which recursively concatenate and hash child nodes, the \emph{computational} variant performs a reduction operation at each recursion level. Specifically, we view the global aggregation function $\mathcal{A}^s$ as a recursive composition of local reduction functions $\mathcal{A}^{l}(\cdot;\theta_{\mathcal{A}^l})$, where each $\mathcal{A}^{l}$ combines the aggregated values of two child nodes into a higher-level representation as denoted in Equation \eqref{eq:a_rec}.  


\begin{equation}\label{eq:a_rec}
    \begin{aligned}
            \mathcal{A}^s&(\Delta^{\mathcal{L}^s}_U;\theta_{\mathcal{A}^s}) = \mathcal{A}^{l}_{(K)}\!\big(
  \mathcal{A}^{l}_{(K-1)}\!\big(
    \cdots 
    \mathcal{A}^{l}_{(0)}(\varphi^{L}_L, \varphi^{L}_R; \theta_{\mathcal{A}_l})
    \cdots;\theta_{\mathcal{A}_l}
  \big);\theta_{\mathcal{A}_l}
\big)
    \end{aligned}
\end{equation}

This recursive construction yields a \emph{Computational Sparse Merkle Tree (CSMT)}, formally defined in Definition~\ref{def:csmt}, for a given user identity set $U$ and their corresponding salted leaf transform set $\Delta^{\mathcal{L}^s}_U$. The relations contained in Definition \ref{def:csmt} is illustrated in Figure \ref{fig:csmt_leaf} using three leaf nodes.

\begin{definition}[Computational Sparse Merkle Tree (CSMT)]\label{def:csmt}
Given an aggregation function $\mathcal{A}^{l}$ parametrized by $\theta$, a CSMT is defined as an augmented Sparse Merkle Tree with the following properties.
\[
\begin{aligned}
    H_i &= \verb|hash|\big(\mathcal{L}^{s} (\delta_i,\mu_i,\tau_i)\big)\qquad \forall u_i\in U\\
    \varphi^0_j &= 
    \begin{cases}
        \mathcal{L}^{s}(\delta_i,\mu_i,\tau_i), \text{ if } \ \exists u_i\in U \text{ and }  H_i = \verb|Bin|(j)\\
        \mathcal{L}^s(\varnothing), \text{ otherwise }
    \end{cases}\\
    \varphi^k_P &= \mathcal{A}^{l}(\varphi^k_L, \varphi^k_R; \theta)\qquad\forall k\in \{0,\ldots,2^{K}-2\}\\
    H^k_P &= \verb|hash|\big(\varphi^k_P\big) \qquad\forall k\in \{0,\ldots,2^{K}-2\}\\
    \varphi^{root} &= R(\mathcal{A},\mathcal{L},U) = \mathcal{A}^s(\Delta_U^{\mathcal{L}^s})\\
    H^{root} &= \verb|hash|\big(\mathcal{A}^s(\Delta_U^{\mathcal{L}^s})\big)
\end{aligned}
\]
\end{definition}


In Definition~\ref{def:csmt}, $\varphi^k_P$ and $H^k_P$ denote, respectively, the aggregated value and the corresponding hash at parent node $P$ on level~$l$ of the tree. The leaf values $\varphi^0_j$ are derived through the salted transformation $\mathcal{L}^{s}$ over user data $(\delta_i, \mu_i, \tau_i)$, where $\verb|Bin|(j)$ represents the binary encoding of the leaf position.  Empty leaves are assigned the default salted value $\mathcal{L}^{s}(\varnothing)$. Specifically, for occupied leaves, the SMT uses the full $K$-bit output without any truncation. For any input $x$, the leaf index is derived as $\mathcal{N}(x) = \mathsf{Decimal}(\mathsf{hash}(x))$, where $\mathsf{Decimal}(\cdot)$ interprets the $K$-bit string as an integer in $\{0, \ldots, 2^K - 1\}$. On the other hand, for the default leaves $\mathcal{L}^s(\varnothing)$ is a fixed, publicly known constant established at setup. As a result, any non-empty leaf satisfies $\varphi^0_j = \mathcal{L}^s(\delta_i, \mu_i, \tau_i) \neq \mathcal{L}^s(\varnothing)$ by construction, since it encodes real user data. As a result, the root hash $H^{\text{root}}$ serves as a global cryptographic commitment to the aggregation outcome $\varphi^{root}$ which is nothing but the global reduction $\mathcal{A}^{s}(\Delta^{\mathcal{L}^{s}}_U) = R(\mathcal{A},\mathcal{L},U)$ for all users. Based on the definition of CSMT, we can define the Merkle consistency as defined in Property \ref{prop:mer_con}.  

\begin{figure}[t]
\centering
\resizebox{0.35\textwidth}{!}{
\begin{tikzpicture}[
  font=\scriptsize, 
  arrow/.style={-Latex, line width=0.5pt},
  box/.style={draw, rounded corners, align=center, inner sep=1.5mm, minimum height=6mm},
  note/.style={align=center},
  aggblocki/.style={
    draw, rounded corners, align=center,
    fill=yellow!20, inner sep=0.5mm, minimum height=6mm,
    font=\scriptsize\itshape
  }
]

\node[note, font=\small] (mrlabel) {Merkle Root};

\node[box, fill=violet!20, draw=violet!60!black,
      below=4mm of mrlabel] (Hroot)
{$H_{\mathrm{root}}$};

\node[box, fill=violet!10, draw=violet!60!black,
      below=4mm of Hroot] (phiroot)
{$\phi_{\mathrm{root}} = A_s(\Delta_{L_s}^{U})$};

\draw[arrow] (phiroot) -- (Hroot);

\node[box, fill=yellow!25, draw=yellow!40!black,
      minimum width=30mm, below=6mm of phiroot] (aggblock)
{\textbf{Recursive Aggregation}\\
$H_P^k=\mathrm{hash}(H_L^k,H_R^k)$};

\draw[arrow] (aggblock) -- (phiroot);

\node[box, fill=yellow!35, draw=yellow!55!black,
      below=5mm of aggblock, xshift=-20mm] (Hi)
{$h_i=\mathrm{hash}(\mathcal{A}_i)$};

\node[box, fill=yellow!35, draw=yellow!55!black,
      below=5mm of aggblock] (Hj)
{$h_j=\mathrm{hash}(\mathcal{A}_j)$};

\node[box, fill=yellow!25, draw=yellow!60!black,
      below=5mm of aggblock, xshift=20mm] (Hk)
{$h_k=\mathrm{hash}(\mathcal{A}_k)$};

\draw[arrow] (Hi) -- (aggblock);
\draw[arrow] (Hj) -- (aggblock);
\draw[arrow] (Hk) -- (aggblock);

\node[aggblocki, below=4mm of Hi] (Ai)
{$\mathcal{A}_i=\mathcal{A}^{\ell}(\phi^0_i,\phi^0_{\emptyset}$)};

\node[box, fill=green!20, draw=green!50!black,
      below=4mm of Ai, xshift=-6mm] (Leafi)
{$\phi^0_i$};

\node[box, fill=red!20, draw=red!60!black,
      below=4mm of Ai, xshift=6mm] (LeafDi)
{$\phi^0_{\emptyset}$};

\draw[arrow] (Leafi.north) -- (Ai.south);
\draw[arrow] (LeafDi.north) -- (Ai.south);
\draw[arrow] (Ai.north) -- (Hi.south);

\node[aggblocki, below=4mm of Hj] (Aj)
{$\mathcal{A}_j=\mathcal{A}^{\ell}(\phi^0_j,\phi^0_{\emptyset}$)};

\node[box, fill=green!20, draw=green!50!black,
      below=4mm of Aj, xshift=-6mm] (Leafj)
{$\phi^0_j$};

\node[box, fill=red!20, draw=red!60!black,
      below=4mm of Aj, xshift=6mm] (LeafDj)
{$\phi^0_{\emptyset}$};

\draw[arrow] (Leafj.north) -- (Aj.south);
\draw[arrow] (LeafDj.north) -- (Aj.south);
\draw[arrow] (Aj.north) -- (Hj.south);

\node[aggblocki, below=4mm of Hk] (Ak)
{$\mathcal{A}_k=\mathcal{A}^{\ell}(\phi^0_k,\phi^0_{\emptyset}$)};

\node[box, fill=green!20, draw=green!50!black,
      below=4mm of Ak, xshift=-6mm] (Leafk)
{$\phi^0_k$};

\node[box, fill=red!20, draw=red!60!black,
      below=4mm of Ak, xshift=6mm] (LeafDk)
{$\phi^0_{\emptyset}$};

\draw[arrow] (Leafk.north) -- (Ak.south);
\draw[arrow] (LeafDk.north) -- (Ak.south);
\draw[arrow] (Ak.north) -- (Hk.south);

\node[box, fill=blue!15, draw=blue!60!black,
      minimum width=10mm, below=3mm of Leafi] (Xi)
{$[(\delta_i,\mu_i),\tau_i]$};

\node[box, fill=gray!20, draw=gray!70!black,
      minimum width=8mm, below=3mm of Xi] (Fi)
{$F(\delta_i,\mu_i)$};

\node[box, fill=blue!10, draw=black!60,
      minimum width=8mm, below=3mm of Fi] (Ci)
{$\mathrm{Concat}(\delta_i,\mu_i)$};

\node[box, fill=orange!20, draw=black!70,
      minimum width=8mm, below=3mm of Ci] (di)
{$\delta_i$};

\draw[arrow] (Xi) -- (Leafi);
\draw[arrow] (Fi) -- (Xi);
\draw[arrow] (Ci) -- (Fi);
\draw[arrow] (di) -- (Ci);

\node[box, fill=blue!15, draw=blue!60!black,
      minimum width=10mm, below=3mm of Leafj] (Xj)
{$[(\delta_j,\mu_j),\tau_j]$};

\node[box, fill=gray!20, draw=gray!70!black,
      minimum width=8mm, below=3mm of Xj] (Fj)
{$F(\delta_j,\mu_j)$};

\node[box, fill=blue!10, draw=black!60,
      minimum width=8mm, below=3mm of Fj] (Cj)
{$\mathrm{Concat}(\delta_j,\mu_j)$};

\node[box, fill=orange!20, draw=black!70,
      minimum width=8mm, below=3mm of Cj] (dj)
{$\delta_j$};

\draw[arrow] (Xj) -- (Leafj);
\draw[arrow] (Fj) -- (Xj);
\draw[arrow] (Cj) -- (Fj);
\draw[arrow] (dj) -- (Cj);

\node[box, fill=blue!15, draw=blue!60!black,
      minimum width=10mm, below=3mm of Leafk] (Xk)
{$[(\delta_k,\mu_k),\tau_k]$};

\node[box, fill=gray!20, draw=gray!70!black,
      minimum width=8mm, below=3mm of Xk] (Fk)
{$F(\delta_k,\mu_k)$};

\node[box, fill=blue!10, draw=black!60,
      minimum width=8mm, below=3mm of Fk] (Ck)
{$\mathrm{Concat}(\delta_k,\mu_k)$};

\node[box, fill=orange!20, draw=black!70,
      minimum width=8mm, below=3mm of Ck] (dk)
{$\delta_k$};

\draw[arrow] (Xk) -- (Leafk);
\draw[arrow] (Fk) -- (Xk);
\draw[arrow] (Ck) -- (Fk);
\draw[arrow] (dk) -- (Ck);

\end{tikzpicture}%
} 
\caption{CSMT leaf-level aggregation and hashing.}
\label{fig:csmt_leaf}
\end{figure}
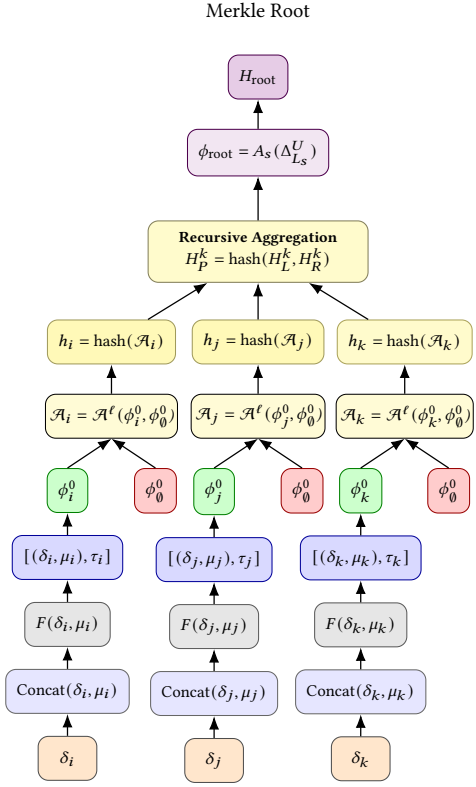

The CSMT supports any local aggregation function $\mathcal{A}^l(\varphi_L, \varphi_R; \theta)$ that is \emph{commutative} and \emph{associative}. This is directly analogous to the class of reduction based computation supported by distributed memory oriented collectives adopted by high-performance frameworks such as MPI \cite{dongarra1995introduction}. Concretely, element-wise sums, products, minimum, maximum, and the sufficient statistics required by the statistical pipelines evaluated in this work decompose naturally into such reductions. Functions outside this class include \emph{order-sensitive} operations such as median and rank statistics and \emph{cross-leaf predicates} needing pairwise comparison across the entire datasets. Supporting such functions would require auxiliary circuit structure beyond the current CSMT design and is left as future work. 

\begin{property}[Merkle Consistency]\label{prop:mer_con}
A CSMT is deemed to be consistent if the following conditions apply
\[
H^k_{L} = H^{k-1}_{L,P}\ \text{ and }\ H^k_{R} = H^{k-1}_{R,P}
\]
where $H^{k-1}_{L,P},H^k_{R} = H^{k-1}_{R,P}$ denote the hash of the left and right subtree roots and $H^k_L$ and $H^k_R$ denote the children node hashes at tree height $k$.
\end{property}

As a direct consequence of Definition~\ref{def:csmt}, each leaf in the CSMT admits a unique structural index determined by its salted hash value, formalized in Property ~\ref{prop:leaf_path}.
\begin{property}[CSMT Inclusion]\label{prop:leaf_identity}
Given a data tuple $(\delta,\mu,\tau)$ for user $u \in U$, its corresponding leaf node index in the CSMT denoted by $\mathcal{N}_u$ can be computed using the salted leaf transform $\mathcal{L}^s$ as follows.
\[
\mathcal{N}_u \;=\; \verb|Decimal|\!\Big[\verb|hash|\big(\mathcal{L}^s(\delta,\mu,\tau)\big)\Big].
\]
\end{property}
\noindent In other words, Property \ref{prop:leaf_identity} ensures that each user can be assigned to a unique leaf index $\mathcal{N}_u$ based on the decimal representation of the hash corresponding to their salted leaf transform.

\begin{property}[CSMT Exclusion]\label{prop:leaf_unoccupied}
Given a leaf node hash $H$ such that 
\[
\verb|Decimal|(H) = \mathcal{N} \text{ and } \varphi^0_{\mathcal{N}} = \mathcal{L}^s(\varnothing)
\]
then there exists no user $u \in U$ with data tuple corresponding to $(\delta,\mu,\tau)$ that leads to $\verb|hash|(\mathcal{L}^s(\delta,\mu,\tau))=H$, for any combination of $\delta\in \mathbb{R}^d,\mu\in\mathbb{R}^{s_u},\tau\in\mathbb{R}^{s_t}$. 
\end{property}
\noindent In simpler terms, Property \ref{prop:leaf_unoccupied} states that a leaf index that maps to the default salted element represents an unoccupied position in the tree. 

\begin{property}[CSMT Path Derivation]\label{prop:leaf_path}
Given a leaf node index $\mathcal{N}$, with the corresponding binary string $\verb|Bin|(\mathcal{N})= [b_1,b_2,\ldots,b_K]\in \{0,1\}^K$ of size $K$, the path from the CSMT root (denoted by level $K$) to the $\mathcal{N}^{th}$ leaf node index can be provided based on the following conditions $\forall k\in\{K,K-1,\ldots 1\}$:
\[
b_{K-l+1} = 
          \begin{cases}
              0, & \text{move to the left child node},\\[4pt]
              1, & \text{move to the right child node.}
          \end{cases}
\]
\end{property}
\noindent Property \ref{prop:leaf_path} exploits the binary representation of $\mathcal{N}_u$ to derive a path from the root to the leaf node index. More precisely, the path derivation scheme adopts a convention wherein the successive child nodes are determined recursively by examining the current bit value. 

Because leaf indices are derived as $\mathcal{N} = \mathsf{Decimal}(\mathsf{hash}(\varphi^0_j))$ and parent node hashes are computed as $H^k_P = \mathsf{hash}(\mathcal{A}^l(\varphi^k_L, \varphi^k_R;\theta))$, any attempt by a computationally bounded adversary to produce a non-default leaf $\varphi^0_j \neq \mathcal{L}^s(\varnothing)$ whose aggregation path yields the same root hash as a default leaf requires finding a collision under \verb|hash|. Under the collision resistance property of Definition~\ref{defn2}, this occurs with at most negligible probability. Therefore, non-membership proofs are unambiguous: a leaf position containing $\mathcal{L}^s(\varnothing)$ cannot be computationally indistinguishable from one containing a valid participant record, and default leaf uniqueness reduces entirely to the collision resistance of the underlying hash function without requiring any additional assumptions on $\mathcal{A}^l$. 

As a result, combining Properties \ref{prop:leaf_identity}, \ref{prop:leaf_unoccupied} and in addition to the one-wayness and collision resistance properties of $\verb|hash|$, we can also state that valid Merkle paths are also consistent according to Property \ref{prop:mer_con}. Without loss of generality, we denote bit values 0,1 to represent left and right child nodes respectively.

%
%
\begin{proposition}\label{prs:csmt}
Given a user $u\in U$ with a data tuple $(\delta,\mu,\tau)$, the CSMT inclusion and exclusion properties form necessary and sufficient conditions for demonstrating membership and non-membership of user $u$ with respect to membership function $\mathcal{M}([u,\delta]|R)$, where $R = \mathcal{A}\Big(\Delta^{\mathcal{L}}_U;\theta_{\mathcal{A}}\Big)$ denotes the global aggregated reduction across all users while $\mathcal{A},\mathcal{L}$ represents the unsalted aggregation and leaf transform functions respectively.
\end{proposition}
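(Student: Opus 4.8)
The plan is to prove the biconditional in two halves --- \emph{sufficiency} (the CSMT inclusion/exclusion properties force the correct value of $\mathcal{M}$) and \emph{necessity} (the value of $\mathcal{M}$ forces the corresponding CSMT property) --- and to route both through a ``de-salting'' step that connects the salted pipeline $(\mathcal{L}^{s},\mathcal{A}^{s})$ living inside the tree to the unsalted pipeline $(\mathcal{L},\mathcal{A})$ appearing in $\mathcal{M}$. First I would record that, by \eqref{eq:salted_func_ltr}, $\mathcal{L}^{s}([\delta,\mu,\tau]) = [\hat{\mathcal{L}}(\delta,\mu;\theta_{\hat{\mathcal{L}}}),\tau]$, so the leaf payload projects onto $\hat{\mathcal{L}}(\delta,\mu)$; by Property~\ref{prop:user_identity} the salt $\mu_i$ binds $\delta_i$ to $u_i$, so $\hat{\mathcal{L}}(\delta,\mu)\in\Delta^{\hat{\mathcal{L}}}_U$ for the identity $u$ holds exactly when $\mathcal{L}(\delta)\in\Delta^{\mathcal{L}}_U$ in the identity-bound sense of Definition~\ref{def_mf}; and by Property~\ref{prop:leaf_uniquenes} the transform salt $\tau$ makes the payload $\mathcal{L}^{s}(\delta,\mu,\tau)$ distinct across $U$ up to the negligible collision probability of \verb|hash|. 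Hence membership of $[u,\delta]$ in the unsalted sense is equivalent to $\mathcal{L}^{s}(\delta,\mu,\tau)\in\Delta^{\mathcal{L}^{s}}_U$.

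For sufficiency, suppose the CSMT inclusion property (Property~\ref{prop:leaf_identity}) holds, i.e.\ the leaf at index $\mathcal{N}_u=\verb|Decimal|[\verb|hash|(\mathcal{L}^{s}(\delta,\mu,\tau))]$ carries $\varphi^0_{\mathcal{N}_u}=\mathcal{L}^{s}(\delta,\mu,\tau)\neq\mathcal{L}^{s}(\varnothing)$. By Definition~\ref{def:csmt} this forces $\mathcal{L}^{s}(\delta,\mu,\tau)\in\Delta^{\mathcal{L}^{s}}_U$, hence $\mathcal{L}(\delta)\in\Delta^{\mathcal{L}}_U$ by the de-salting step. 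Walking up the unique root-to-leaf path of Property~\ref{prop:leaf_path} and invoking Merkle consistency (Property~\ref{prop:mer_con}) at every level, the parent aggregations $\varphi^k_P=\mathcal{A}^{l}(\varphi^k_L,\varphi^k_R;\theta)$ compose, via the recursion \eqref{eq:a_rec}, to $\varphi^{root}=\mathcal{A}^{s}(\Delta^{\mathcal{L}^{s}}_U)=R(\mathcal{A},\mathcal{L},U)$; since $H^{root}=\verb|hash|(R)$ is the published commitment, $R=\mathcal{A}(\Delta^{\mathcal{L}}_U)$. Both clauses of Definition~\ref{def_mf} hold, so $\mathcal{M}([u,\delta]\mid R)=1$. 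The exclusion case is symmetric: if $\varphi^0_{\mathcal{N}_u}=\mathcal{L}^{s}(\varnothing)$, Property~\ref{prop:leaf_unoccupied} gives $\mathcal{L}^{s}(\delta,\mu,\tau)\notin\Delta^{\mathcal{L}^{s}}_U$, hence $\mathcal{L}(\delta)\notin\Delta^{\mathcal{L}}_U$ and $\mathcal{M}([u,\delta]\mid R)=0$.

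For necessity, assume $\mathcal{M}([u,\delta]\mid R)=1$. Then $\mathcal{L}(\delta)\in\Delta^{\mathcal{L}}_U$, so by de-salting $\mathcal{L}^{s}(\delta,\mu,\tau)$ is among the leaf payloads of the CSMT over $\Delta^{\mathcal{L}^{s}}_U$; collision resistance of \verb|hash| pins this payload to the single index $\mathcal{N}_u=\verb|Decimal|[\verb|hash|(\mathcal{L}^{s}(\delta,\mu,\tau))]$, which is therefore occupied by exactly $\mathcal{L}^{s}(\delta,\mu,\tau)$ --- precisely the CSMT inclusion property --- while the second clause $R=\mathcal{A}(\Delta^{\mathcal{L}}_U)$ with \eqref{eq:a_rec} identifies $R$ with $\varphi^{root}$, so the consistent path of Properties~\ref{prop:leaf_path}--\ref{prop:mer_con} to $H^{root}=\verb|hash|(R)$ exists. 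Conversely, if $\mathcal{M}([u,\delta]\mid R)=0$ while the tree is consistent and its root commits to $R$, the failing clause must be $\mathcal{L}(\delta)\notin\Delta^{\mathcal{L}}_U$ (the other being enforced by consistency), and by one-wayness and collision resistance of \verb|hash| the leaf at $\mathcal{N}_u$ cannot hold $\mathcal{L}^{s}(\delta,\mu,\tau)$; since that index holds either a user payload or $\mathcal{L}^{s}(\varnothing)$, it holds the default --- the CSMT exclusion property.

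The step I expect to be the main obstacle is the de-salting equivalence: it is where the argument leans on Properties~\ref{prop:user_identity} and~\ref{prop:leaf_uniquenes}, where I must be careful that ``membership'' in Definition~\ref{def_mf} is read with the identity binding $u\leftrightarrow\delta$ rather than bare value-containment of $\mathcal{L}(\delta)$, and where I must argue that a salt-induced leaf collision (two users whose salted transforms hash to the same index) occurs only with probability negligible in $K$, so that $\mathcal{N}_u$ is genuinely unique. A secondary subtlety is that necessity needs the CSMT root hash to be fixed in advance to $\verb|hash|(R)$ --- without that binding a prover could exhibit an alternative tree --- so this assumption (implicit in Definition~\ref{def_pf}) should be stated explicitly. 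The remaining pieces --- composing the $\mathcal{A}^{l}$'s along a path into $\mathcal{A}^{s}$ through \eqref{eq:a_rec}, and reading a leaf index off $\verb|Bin|(\mathcal{N}_u)$ via Property~\ref{prop:leaf_path} --- are bookkeeping once the consistency invariant is in place.
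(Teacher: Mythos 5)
Your proof is correct and follows essentially the same route as the paper's: both directions are argued via the salt-uniqueness properties (Properties~\ref{prop:user_identity} and~\ref{prop:leaf_uniquenes}), collision resistance of \verb|hash|, and the leaf-index derivation of Properties~\ref{prop:leaf_identity}--\ref{prop:leaf_unoccupied}, with the negligible-collision bound carrying the non-membership direction. You are in fact somewhat more careful than the paper in two places --- you explicitly discharge the second clause of Definition~\ref{def_mf} (that $R=\mathcal{A}(\Delta^{\mathcal{L}}_U)$) by composing the per-level aggregations along the Merkle path via \eqref{eq:a_rec} and Property~\ref{prop:mer_con}, and you isolate the salted-to-unsalted ``de-salting'' equivalence that the paper's proof uses only implicitly.
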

\begin{proof}
Proof provided in Appendix \ref{subsec:proof_csmt}
\end{proof}

Proposition \ref{prs:csmt} formally shows that a CSMT can effectively help prove whether a particular user's data record was included in a specific computational aggregation step of the larger statistical method driving the clinical study. In other words, Proposition \ref{prs:csmt} can help pinpoint the exact computational step where a user's data record has been used to drive the statistical method that underpins the overarching clinical research. Therefore, Proposition \ref{prs:csmt} has a foundational implication for the \cosmetic framework since it paves the way for a publicly verifiable zero-knowledge driven mechanism for reliably establishing computational membership claim. 





\subsection{Zero Knowledge Assertions for CSMTs}
We present the preliminaries pertaining to the zero knowledge encapsulations of the Computational Sparse Merkle Tree formulations. More specifically, we consider a zkSNARK framework and summarize the foundational steps of setup, prove and verify functions imminent in zkSNARKs. Our treatment for CSMTs includes both layers of CSMT pertaining to salted leaf transforms $\mathcal{L}^s$ as well as layered salted aggregation function $\mathcal{A}^l$.

\begin{definition}[zk-SNARK Setup Phase]\label{defn3}
Given a standard security parameter $\lambda$, the zk-SNARK setup phase is defined by the function \verb|Setup| for both the salted leaf transformation and aggregation functions as follows:
\[
\begin{aligned}
\verb|Setup|(1^\lambda, \mathcal{L}^s, \theta_{\mathcal{L}^s}) &\mapsto (pk_{\mathcal{L}^s}, vk_{\mathcal{L}^s}),\\
\verb|Setup|(1^\lambda, \mathcal{A}^{l}, \theta_{\mathcal{A}^{l}}) &\mapsto (pk_{\mathcal{A}^{l}}, vk_{\mathcal{A}^{l}}).
\end{aligned}
\]
\end{definition}

The setup phase in Definition~\ref{defn3} generates the proving and verification key pairs 
$(pk, vk)$ corresponding to the functions $\mathcal{L}^s$ and $\mathcal{A}^{l}$. 
This phase is typically executed once per circuit instantiation and is parameterized by the model weights $\theta$. 
Since the underlying state-space model $\mathcal{M}$ is pretrained and stable, the setup need not be repeated frequently. 
However, any modification to $\theta$—such as retraining or parameter updates—necessitates re-execution of the setup phase.

\begin{definition}[zk-SNARK Proving Phase]\label{defn4}
The zk-SNARK proof generation phase is defined by the function \verb|Prove| as follows:
\[
\begin{aligned}
\verb|Prove|(pk_{\mathcal{L}^s}, \theta_{\mathcal{L}^s}, \mathcal{L}^s, [\delta, \mu], \mathcal{L}^s(\delta,\mu,\tau))
    &\mapsto \Pi^{u}_{\mathcal{L}^s},\\[4pt]
\verb|Prove|(pk_{\mathcal{A}^l}, \theta_{\mathcal{A}^l}, \mathcal{A}^l, [\varphi^k_L, \varphi^k_R], \mathcal{A}^l(\varphi^k_L, \varphi^k_R))
    &\mapsto \Pi^{k}_{\mathcal{A}^l}.
\end{aligned}
\]
\end{definition}

Definition~\ref{defn4} specifies the proof generation step. 
The \verb|Prove| function takes as input the proving key $pk$, the function parameters $\theta$, and the corresponding witness–statement pairs
$([\delta,\mu], \mathcal{L}^s(\delta,\mu,\tau))$ for the leaf transformation, or 
$([\varphi^k_L,\varphi^k_R], \mathcal{A}^l(\varphi^k_L,\varphi^k_R))$ for the aggregation step.
It outputs a proof artifact $\Pi$, which attests to the correct evaluation of the underlying function in zero knowledge. 
Proofs are generated independently at each layer or recursion depth of the CSMT.

\begin{definition}[zk-SNARK Verification Phase]\label{defn5}
The verification phase of the zk-SNARK protocol is defined by the function \verb|Verify| as follows:

\[
\begin{aligned}
\verb|Verify|(vk_{\mathcal{L}^s}, \Pi_{\mathcal{L}^s}^u)
\mapsto \Phi_{\mathcal{L}^s}^u, \text{ and } \
\verb|Verify|(vk_{\mathcal{A}^{l}}, \Pi_{\mathcal{A}^{l}}^k)
\mapsto \Phi_{\mathcal{A}^{l}}^{u,k}.
\end{aligned}
\]

\end{definition}

Definition~\ref{defn5} describes the verification step, in which a verifier uses the verification key $vk$ to validate a proof $\Pi$ corresponding to a given function execution. 
The \verb|Verify| function outputs a Boolean flag $\Phi \in \{0,1\}$ indicating whether the proof is valid. 
In practice, this allows regulators or auditors to confirm that each reported computation—whether at the leaf or aggregation level—was performed correctly, without requiring access to the underlying private data. For notational simplicity, we refer to leaf transform and Merkle path proofs by the acronyms LTR and MRP proofs respectively. As a consequence of Definition \ref{defn4}, we collate the set of LTR and MRP proof artifacts into a distinct proof tuple denoted by $\Pi^u_{\mathsf{CSMT}}$ for every user as denoted by Equation \eqref{eq:collated}.
\begin{gather}\label{eq:collated}
    \text{LTR Proof}: \Pi^u_{\mathcal{L}^s}, \
    \text{MRP Proof Set}:\Pi^{u,1:K}_{\mathcal{A}^l} = \big\{\Pi^{u,1}_{\mathcal{A}^l},\ldots,\Pi^{u,K}_{\mathcal{A}^l}\big\}\\
    \text{CSMT Proof Set}:\Pi^u_{\mathsf{CSMT}} = \Big[\Pi^u_{\mathcal{L}^s}, \Pi^{u,1:K}_{\mathcal{A}^l}\Big]
\end{gather}

\begin{proposition}\label{prs:pi_mem}
Given a user $u$ and individual datum $\delta$, an aggregated reduction value $\mathcal{R}(\mathcal{A},\mathcal{L},U)$ and the proof tuple set $\Pi^u_{\mathsf{CSMT}}$, and consistent $vk_{\mathcal{L}^s},vk_{\mathcal{A}^l}$, the following conditions are necessary and sufficient for realizing the membership function $\mathcal{M}([u,\delta]|\mathcal{R}(\mathcal{A},\mathcal{L},U))$
\[
\begin{aligned}
\Phi_{\mathcal{L}^s}^u = 1 \text{ and }\ \Phi_{\mathcal{A}^{l}}^{u,k} = 1, \quad \forall k\in\{1,K\}
\end{aligned}
\]
\end{proposition}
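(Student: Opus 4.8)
The plan is to obtain Proposition~\ref{prs:pi_mem} by composing the structural characterization of membership in Proposition~\ref{prs:csmt} with the completeness and knowledge-soundness guarantees of the underlying zk-SNARK from Definitions~\ref{defn3}--\ref{defn5}. The key observation is that the verifier-side event ``$\Phi_{\mathcal{L}^s}^u = 1$ and $\Phi_{\mathcal{A}^l}^{u,k}=1$ for all $k\in\{1,\ldots,K\}$'' should be shown equivalent---up to a negligible error in $\lambda$---to the CSMT inclusion condition of Property~\ref{prop:leaf_identity}, which Proposition~\ref{prs:csmt} already ties to $\mathcal{M}([u,\delta]\mid\mathcal{R}(\mathcal{A},\mathcal{L},U))=1$. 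I would therefore split the argument into the two implications and glue them through Proposition~\ref{prs:csmt}.

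For sufficiency, suppose every proof in $\Pi^u_{\mathsf{CSMT}}$ verifies under consistent keys $vk_{\mathcal{L}^s}, vk_{\mathcal{A}^l}$. Knowledge soundness of the zk-SNARK lets us extract, except with probability $\negl(\lambda)$, a valid witness for each verified statement: from $\Pi^u_{\mathcal{L}^s}$ a pair $(\delta,\mu)$ whose salted leaf transform $\mathcal{L}^s(\delta,\mu,\tau)$ is evaluated correctly as in Definition~\ref{def:csmt}, and from each $\Pi^{u,k}_{\mathcal{A}^l}$ a pair of child values $(\varphi^k_L,\varphi^k_R)$ whose local reduction $\mathcal{A}^l(\varphi^k_L,\varphi^k_R)$ matches the claimed parent value. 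Walking these extracted witnesses from the leaf index $\mathcal{N}_u$ (determined by the hash of $\mathcal{L}^s(\delta,\mu,\tau)$ per Property~\ref{prop:leaf_identity}) up to the root, and invoking the Merkle-consistency conditions of Property~\ref{prop:mer_con} together with collision resistance of the hash function (Definition~\ref{defn2}) to rule out mismatched intermediate hashes, shows the extracted objects form a consistent root-to-leaf path terminating in the committed reduction $\mathcal{R}(\mathcal{A},\mathcal{L},U)$. This is exactly CSMT inclusion, so Proposition~\ref{prs:csmt} yields $\mathcal{M}([u,\delta]\mid\mathcal{R})=1$.

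For necessity, I would run the argument in reverse. Assume $\mathcal{M}([u,\delta]\mid\mathcal{R})=1$. By Proposition~\ref{prs:csmt} the CSMT inclusion property holds, so the tuple $(\delta,\mu,\tau)$ occupies the leaf at index $\mathcal{N}_u$ and the values $\varphi^0_{\mathcal{N}_u}, \varphi^1_P, \ldots, \varphi^{root}=\mathcal{R}$ along the root path, together with the required sibling values, satisfy the CSMT relations of Definition~\ref{def:csmt}. These are precisely the witness--statement pairs expected by the \verb|Prove| interface of Definition~\ref{defn4}, so an honest prover can produce $\Pi^u_{\mathcal{L}^s}$ and the set $\Pi^{u,1:K}_{\mathcal{A}^l}$; completeness of the zk-SNARK then guarantees $\Phi_{\mathcal{L}^s}^u=1$ and $\Phi_{\mathcal{A}^l}^{u,k}=1$ for every $k\in\{1,\ldots,K\}$. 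The two implications together establish the claimed equivalence.

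The step I expect to be the main obstacle is the ``gluing'' of the per-level proofs: each MRP proof independently certifies only a single local reduction $\mathcal{A}^l(\varphi^k_L,\varphi^k_R)$, so it must be argued that the $K$ verified statements chain into one globally consistent Merkle path rather than $K$ unrelated computations. Handling this cleanly requires that each proof's statement expose the relevant child and parent hashes, so that a verified level-$k$ proof forces its parent hash to coincide with a child hash of the verified level-$(k{+}1)$ proof via Property~\ref{prop:mer_con}, with collision resistance of the hash function ensuring that equal hashes imply equal aggregated values. A secondary subtlety, which should be made explicit, is that knowledge soundness is only computational: the extractor runs against PPT provers and the equivalence is therefore meant in the ``except with probability $\negl(\lambda)$'' sense, so the ``necessary and sufficient'' phrasing of the proposition is to be read in that cryptographic setting.
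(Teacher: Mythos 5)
Your overall strategy---composing Proposition~\ref{prs:csmt} with the completeness and knowledge-soundness of the zk-SNARK, and flagging the need to chain the per-level proofs into one consistent Merkle path---is sensible, and its sufficiency/necessity skeleton matches the paper's two-case structure. Your treatment of the gluing problem and the explicit $\negl(\lambda)$ caveat are in fact more careful than the paper's own argument, which defers knowledge soundness entirely to Proposition~\ref{prs:ks_lrt_mrp} and argues here only from the structural Properties~\ref{prop:leaf_uniquenes}, \ref{prop:mer_con} and \ref{prop:leaf_path}.

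However, there is a genuine gap: you read the proposition as the equivalence ``all flags equal $1$ $\iff$ $\mathcal{M}([u,\delta]\mid\mathcal{R})=1$,'' whereas the paper reads ``realizing the membership function'' as the verifier being able to correctly determine the \emph{value} of $\mathcal{M}$, whether that value is $1$ or $0$. The paper's proof has four sub-cases: Cases~1.2 and~2.2 explicitly handle $u\notin U$ with $\mathcal{M}=0$, and Case~2.2 concludes that the flags are \emph{still} all equal to $1$ for such a user---this is exactly how the framework delivers exclusion proofs (Property~\ref{prop:leaf_unoccupied}): the LTR and MRP proofs verify, but the leaf on the certified path is the default value $\mathcal{L}^s(\varnothing)$. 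Under your reading, the sufficiency direction ``flags all $1$ $\Rightarrow$ $\mathcal{M}=1$'' is false in the paper's own system, since a fully verifying proof tuple whose Merkle path terminates at a default leaf corresponds to non-membership. Your own extraction argument actually surfaces this: walking the extracted witnesses up from the leaf establishes a consistent path to the committed root, but whether that path witnesses inclusion or exclusion is decided by whether $\varphi^0_{\mathcal{N}}$ equals $\mathcal{L}^s(\varnothing)$, a check that lives outside the verification flags. To repair the proof you would need to split each direction into the sub-cases $u\in U$ and $u\notin U$ (as the paper does) and state the conclusion as ``the verified proofs, together with the revealed leaf value, determine $\mathcal{M}$ correctly,'' rather than equating successful verification with membership.
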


\begin{proof}
Proof given in Appendix \ref{subsec:proof_pi_mem}
\end{proof}

Proposition \ref{prs:pi_mem} shows that the successful public verification of each individual zk-SNARK artifacts in the set $\Pi^u_{\mathsf{CSMT}}$ can help demonstrate the utilization (or lack thereof), of a user data record to drive an individual aggregation step that is implemented through a CSMT. At a fundamental level, Proposition \ref{prs:pi_mem} helps realize the implications of Proposition \ref{prs:csmt} purely in terms of zk-SNARKs. Proposition \ref{prs:pi_mem} ensures that the set of zk-SNARKs contained in $\Pi^u_{\mathsf{CSMT}}$ serve as the inclusion or exclusion guarantees for user datasets.




\section{Algorithmic Foundations of \cosmetic}
We construct the algorithmic foundation of the \cosmetic framework by discussing the set of infrastructure assumptions, followed by the prover and verifier algorithmic components.

\subsection{Assumptions}
The assumptions for the \cosmetic framework encompass both endogenous as well as exogenous factors pertaining to data storage and trusted setups.
\begin{assumption}[PHR Database]\label{as:phr}
    The raw data record and user salt $\delta_u,\mu_u$ along with a unique transform salt $\tau_u$ of each user $u$ are part of a personal health record (PHR) database capable of providing publicly verifiable membership proofs of individual user datum.
\end{assumption}
Assumption \ref{as:phr} postulates the existence of a personal health record (PHR) database that stores and manages raw data record, user and transform salt. Additionally, we also assume that the PHR database is capable of providing Merkle membership proofs of $(\mathsf{hash}(\delta_u,\mu_u),\mathsf{hash}(\tau_u))$ through simple Merkle trees. The primary advantage of doing so would be to prevent data tampering or misuse of user data records in the clinical study itself. \cosmetic also supports multiple such PHR databases that store data for different users participating in a single study, provided that each database can substantiate user records through simple Merkle proofs. 

\begin{assumption}[Trusted Environment]\label{as:ts}
A trustworthy, secure environment exists which:
\begin{enumerate}
    \item guarantees the existence of a Common Reference String (CRS) for zk-SNARK generation, through a publicly verifiable ceremony.
    \item generates proving and verification keys, as part of the zk-SNARK setup phase.
\end{enumerate}
\end{assumption}
Assumption \ref{as:ts} discuses the presence of a trustworthy environment where the proving and verification keys for zk-SNARK circuits are generated as part of the setup phase. The implementation of \cosmetic is based on the \texttt{ezkl} framework \cite{ezkl2024} which uses the Halo2 proof generation backend. As a result, our proof generation framework ultimately relies on the KZG commitment scheme with a universal Powers of Tau ceremony.  



\begin{assumption}[Clinical Research Organization (CRO)]\label{as:cro}
    A CRO exists that governs the implementation of the study by: 
    \begin{enumerate}
        \item acquiring user data from the PHR database. 
        \item delivering CSMT zk-SNARK artifacts with LTR and MRP proofs.
        \item publicly disclosing MRP and LTR proofs for each user. 
    \end{enumerate}
\end{assumption}

Assumption \ref{as:cro} assumes the existence of a clinical research organization (CRO) which is the driver of the entire clinical research study. The CRO is assumed to handle the acquisition of user data and corresponding transform salts of each user from the PHR database; as well as generating and disclosing proof artifacts. It is not necessary for the the CRO and the trusted environment assumed in Assumption \ref{as:ts} to be identical. In fact, the \cosmetic architecture allows these two entities to be distinct in the real-world. As a result, scenarios wherein a third party, such as a regulatory authority, generates and delivers the zk-SNARK circuits for the CRO to carry out the clinical study is plausible. Therefore, based on Assumptions \ref{as:phr}, \ref{as:ts} and \ref{as:cro}, we now discuss the algorithm design of the \cosmetic architecture.

\subsection{Adversarial Model}\label{sec:adv_model}

We now formally characterize the adversarial capabilities assumed in \cosmetic and delineate the boundaries of its cryptographic guarantees.

\noindent\textit{CRO Adversarial Capabilities}: The CRO is modeled as a \emph{computationally bounded malicious adversary}. We assume that the CRO may adaptively select the included participant set $U^{inc} \subseteq U$ after observing study outcomes; selectively omit records from $U^{inc}$ to bias statistical results; and attempt to substitute a participant's data under a different identity. The CRO \emph{cannot}, however, forge valid zk-SNARK proofs which would require breaking the knowledge soundness of the underlying proof system (see Proposition~\ref{prs:ks_lrt_mrp}). Collusion between the CRO and the trusted setup party (Assumption~\ref{as:ts}) is considered out of scope. If the CRS is adversarially generated, soundness guarantees for any zk-SNARK-based system collapse and are therefore treated as a residual trust assumption common to the class of KZG-based constructions.

\noindent\textit{PHR Database as a Trust Boundary}:
The PHR database (Assumption~\ref{as:phr}) is treated as a \emph{declared trust boundary}. We assume that correctness of identity binding that establishes leaf index as a derivation of $(\delta_u, \mu_u, \tau_u)$ corresponds to the genuine data of user $u$ if the PHR database is itself uncompromised. A malicious PHR database could facilitate cross-identity substitutions (attaching user $u$'s data under a different identity $u'$). We acknowledge this as a fundamental limitation of the architecture; mitigating it would require a higher-assurance PHR infrastructure outside the scope of \cosmetic.

\noindent\textit{Two-Layer Completeness Protection}:
While \cosmetic cannot cryptographically enforce \emph{dataset completeness} in full generality, the framework provides two complementary protections against the most operationally relevant completeness attacks. First at the injection layer, every LTR proof generated by the CRO (Algorithm~\ref{alg:leaf_transform_proof_rewritten}) chains to a PHR membership proof via Assumption~\ref{as:phr}. A regulator can therefore demand PHR-backed proofs for every non-default leaf in the published CSMT, and flag any leaf whose claimed identity cannot be corroborated by the PHR database. This prevents a malicious CRO from injecting fabricated participants or substituting one user's data under a different identity. Next at the Omission Detection Layer, a regulator suspecting selective omission of participants with unfavorable outcomes can issue a challenge set $U^{chall} \subseteq U$. The CRO is compelled to rerun \textsc{BuildSMT} over $U^{inc} \cup U^{chall}$ using the \emph{existing, already-published} verification keys without requiring a new setup. A statistically significant discrepancy between the original and challenged aggregate roots is evidence of omission.

\noindent\textit{Scope of Guarantees}:
\cosmetic provides \emph{computational correctness} guarantees which ensure that given that the CRO supplies a dataset, the zk-SNARK proofs certify that the statistical pipeline was executed faithfully over that dataset. \cosmetic does \emph{not} provide \emph{dataset completeness} guarantees in the information-theoretic sense. In other words, it cannot certify that all eligible participants were included. However, it can be used in conjunction with a trusted PHR database to verify data membership while serving as the foundation for regulatory inquiries for various user subsets. Similarly, from an information theoretic perspective, \cosmetic does not guarantee \emph{liveness}, implying that the CRO can theoretically ignore proof requests from particular participants. These limitations are consistent with the broader class of verifiable computation systems and are best addressed through complementary organizational or regulatory mechanisms.

Collectively using the Assumptions \ref{as:phr}, \ref{as:ts} and \ref{as:cro}, as well as our adversarial threat model we now discuss the algorithm design of the \cosmetic architecture.

\subsection{CRO Oriented Algorithmic Components}
The CRO side algorithmic components broadly handles tasks pertaining to leaf transformations, CSMT construction and generation of corresponding zk-SNARKS. For brevity, we abstract away operations pertaining to private data storage on the CRO side.
\subsubsection{Salted Leaf Transforms}:
In Algorithm \ref{alg:leaftransform}, we present the function $\verb|LeafTransform|$ which is denotes the LTR operation handling the transformation of raw user data records including user and transformation salts. First, the function generates the salted transform leaf value $\varphi^0$ and its associated witness $\Omega_{LT}$ based on the compiled circuit for leaf transform $\mathcal{L}^s$ which is parametrized by $\theta_{\mathcal{L}^s}$. The function stores the witnesses privately indexed by the hash of salted raw data tuple, the transform salt and the choice of LTR circuit. Finally, the function returns the transformed value $\varphi^0$, the hash of the leaf transform $H^{leaf}$, transform salt hash $H^{\tau}$ and the leaf index $\mathcal{N}$.
\begin{algorithm}[htbp]
 \caption{Function for Salted Leaf Transformation}\label{alg:leaftransform}
 \begin{algorithmic}[1]
    \Function{LeafTransform}{$\delta,\mu,\tau,\mathcal{L}^s,\theta_{\mathcal{L}^s}$}
        \State compute salted leaf transform $\varphi^{0} \leftarrow 
            \mathcal{L}^{s}([\delta,\mu,\tau];\theta_{\mathcal{L}^s})$
        \State set $H^{(\delta,\mu)}\leftarrow\mathsf{hash(\delta,\mu)}$ and $H^{\tau}\leftarrow\mathsf{hash}(\tau)$
        \State generate witness $\Omega_{LT}\leftarrow[\delta,\mu,\tau, \varphi^{0}]$
        \State\LineComment{store LT witness privately}
        \State \Call{StorePrivateLTWitness}{$[H^{(\delta,\mu)},H^{\tau},\mathcal{L}^s],[\Omega_{LT},\theta_{\mathcal{L}^s}]$}
        \Statex\LineComment{determine leaf hash and index}
        \State set $H^{leaf}\leftarrow\mathsf{hash}(\varphi^{0})$
        \State set $\mathcal{N}\leftarrow\mathsf{Decimal}(H^{leaf})$

        
        \State \Return $\varphi^{0}$, $H^{leaf}$, $H^{\tau}$, $\mathcal{N}$ 
    \EndFunction
 \end{algorithmic}
\end{algorithm}

\subsubsection{CSMT Construction}: 
The CRO constructs the CSMT using the function $\verb|BuildSMT|$ as represented in Algorithm \ref{alg:buildsmt}. The function consumes a given user set $U$ with salted leaf transformations and indices denoted by $\{\mathcal{N}_i,\varphi^{0}_i\}_{u_i\in U}$, a CSMT tree height of $K$, as well as the aggregation function $\mathcal{A}^l$ parametrized by $\theta_{\mathcal{A}^l}$. The CRO inserts the non-default user leaves at the appropriate locations while the rest are left with the default value. Using a bottom-up aggregation approach, the CRO builds the sparse Merkle tree by recursively computing the parent for each node identified by its hash using $\mathcal{A}^l(\cdot,\theta_{\mathcal{A}^l})$. The aggregation results in the root value $\Psi^{K}[0]$, root hash $H^{K}[0]$ and CSMT witnesses arranged in a sparse Merkle tree format denoted by $\Omega_{CSMT}$. While the CSMT witnesses are stored privates, the function returns the root value and the root hash as their outputs. 
\begin{algorithm}[htbp]
 \caption{CSMT Construction} 
 \label{alg:buildsmt}
 \begin{algorithmic}[1]

    \Function{BuildSMT}{$\{\mathcal{N}_i,\varphi^{0}_i\}_{u_i\in U},
                        K,\mathcal{A}^{l},
                        \theta_{\mathcal{A}^l}$} 
        \State set $\Psi^{0}[j] \leftarrow \mathcal{L}^s(\varnothing)$ $\forall j \in \{0, 2^K-1\}$ \Comment{Initialize $2^K$ leaves}
        \Statex\LineComment{insert valid user leaves at their hashed indices}
        \State $\Psi^{0}[\mathcal{N}_i] \leftarrow \varphi^{0}_i$ $\forall (\mathcal{N}_i,\varphi^{0}_i)$
        \Statex \LineComment{compute leaf-level hashes}
        \State $H^{0}[j] \leftarrow \verb|hash|(\Psi^{0}[j])$ $\forall j \in \{0, 2^K-1\}$
        \Statex \LineComment{bottom-up aggregation}
        \For{$k = 1$ \textbf{to} $K$}
            \For{$j = 0$ \textbf{to} $2^{K-k}-1$}
                \State $\varphi^{k}_L \leftarrow \Psi^{k-1}[2j]$
                \State $\varphi^{k}_R \leftarrow \Psi^{k-1}[2j+1]$
                \State compute parent aggregate $\Psi^{k}[j] \leftarrow 
                    \mathcal{A}^{l}(\varphi^{k}_L,\varphi^{k}_R;\theta_{\mathcal{A}^{l}})$
                \State compute parent hash $H^{k}[j] \leftarrow \verb|hash|(\Psi^{k}[j])$
            \EndFor
        \EndFor
        \Statex \LineComment{construct CSMT Witness object for membership tests}
        \State $\Omega_{CSMT} \leftarrow 
               \Big[\big\{\Psi^{k}[j]\big\}_{j=0}^{2^{K-k}-1}\Big]_{k=0}^{K}$
        \State \Call{StorePrivateCSMTWitness}{$\mathcal{A}^l,(\Omega_{CSMT},\theta_{\mathcal{A}^l})$}
        \Statex \LineComment{return global aggregate, root hash and CSMT witness}
        \State \Return $\Psi^{K}[0]$, $H^{K}[0]$ 
        
    \EndFunction
 \end{algorithmic}
\end{algorithm}

\subsubsection{Generation of LTR Proofs}:
In function $\verb|CRO-LTRProve|$ given in Algorithm \ref{alg:leaf_transform_proof_rewritten}, we discuss the mechanism to generate zk-SNARKs for the leaf transformation on raw user data records. The function consumes the hashes of salted raw user data and the transform salts denoted by $H^{(\delta,\mu)},H^{\tau}$ respectively. Additionally the verification key $vk_{\mathcal{L}^s}$ for the specific leaf transformation function must also be provided to help identify the specific transformation function for which the proofs are being requested. Consequently, the function loads the corresponding proving key $pk_{\mathcal{L}^s}$, the compiled circuit $\mathcal{L}^s$. Next, the LTR witnesses and parameters $\Omega_{LT},\theta_{\mathcal{L}^s}$ are looked up based on the provided salted raw user data record and transform salts. As a result, the CRO generates the zk-SNARK $\Pi_{\mathcal{L}^S}$ and returns the leaf hash, index and the SNARK artifact denoted by $H^{leaf},\mathcal{N}$ and $\Pi_{\mathcal{L}^s}$ respectively.

\begin{algorithm}[htbp]
 \caption{Salted Leaf Transform Proof Generation}
 \label{alg:leaf_transform_proof_rewritten}
 \begin{algorithmic}[1]

    \Function{CRO-LTRProve}{$H^{(\delta,\mu)},H^{\tau},vk_{\mathcal{L}^s}$}
    

        \State load compiled circuit $\mathcal{L}^s$ based on verification key $vk_{\mathcal{L}^s}$
        \State load proving keys $pk_{\mathcal{L}^s}$ based on $\mathcal{L}^s$
        \Statex\LineComment{load the LT witness and parameters}
                \State $(\Omega_{LT},\theta_{\mathcal{L}^s})\leftarrow$\Call{LoadPrivateLTWitness}{$H^{(\delta,\mu)},H^{\tau},\mathcal{L}^s$}

        
        \Statex\LineComment{generate LTR zk\text{-}SNARK}
        \State
            $\Pi_{\mathcal{L}^s} \leftarrow
            \verb|Prove|(
                pk_{\mathcal{L}^s},
                \mathcal{L}^s,
                \theta_{\mathcal{L}^s},
                \Omega_{LT}
            )$
        \State determine leaf hash $H^{leaf} \leftarrow \Pi_{\mathcal{L}^s}[\mathsf{output}]$
        \State compute leaf index $\mathcal{N} \leftarrow \verb|Decimal|(H)$
        \State\LineComment{return leaf transform, hash, index and proof artifact}
        \State\Return $H^{leaf}$, $\mathcal{N}$, $\Pi_{\mathcal{L}^s}$

    \EndFunction

 \end{algorithmic}
\end{algorithm}

\subsubsection{Generation of MRP Proofs}:
The mechanism for generating MRP proofs for a given leaf transform of user data is denoted by the function $\verb|CRO-MRPProve|$ which is presented in Algorithm \ref{alg:csmt_path_proof_updated}. The function consumes the leaf hash and index $H^{leaf},\mathcal{N}$, a nonce value $\eta$ supplied by the verifier as well as the verification key $vk_{\mathcal{A}^l}$ to identify the aggregation circuit and parameters. The leaf index is binarized to represent an array of selector or path index bits to help serve as a route from the corresponding leaf to the root of the CSMT. The nonce value plays an important role to help assert the integrity of the per-hop MRP zk-SNARK artifacts with respect to the binary representation of the leaf node index obtained from the LTR proof. 

After loading the proving key,$pk_{\mathcal{A}^l}$, aggregator circuit $\mathcal{A}^l$ and the CSMT witness and parameters $(\Omega_{CSMT},\theta_{\mathcal{A}^l})$, the function iterates through all levels of the tree starting from leaf level. At each iteration, the parent value $\Omega^{P}_{CSMT}$ is obtained from CSMT witness object, and depending on the corresponding path index bit, the orientation (i.e. left or right of the current node) of the sibling is decided. Based on the witnesses, $k^{th}$-hop zk-SNARK is generated which is denoted by $\Pi^{(k)}_{\mathcal{A}^l}$. The function returns the hash of the root $H^{root}$, the series of zk-SNARK artifacts for each hop denoted by $\Pi^u_{\mathsf{CSMT}}$ as well as the array of path index bits $B$.
\begin{algorithm}[htbp]
 \caption{CSMT Merkle Path Proof Generation}
 \label{alg:csmt_path_proof_updated}
 \begin{algorithmic}[1]

    \Function{CRO-MRPProve}{$H^{leaf},\mathcal{N},\eta,vk_{\mathcal{A}^l}$}
        \State initialize empty hop-proof list $\Pi_{\mathsf{CSMT}}$
        
        \State load compiled circuit $\mathcal{A}^l$ based on verification key $vk_{\mathcal{A}^l}$
        \State load proving keys $pk_{\mathcal{A}^l}$ based on $\mathcal{A}^l$

        \Statex\LineComment{load the CSMT witness and parameters}
        \State $(\Omega_{CSMT},\theta_{\mathcal{A}^l})\leftarrow$\Call{LoadPrivateCSMTWitness}{$\mathcal{A}^l$}
        \State compute binary index path $B \leftarrow \verb|Bin|(\mathcal{N})$
        \State set starting index to be $j\leftarrow \mathcal{N}$ which is the leaf index
        \Statex\LineComment{iterate through CSMT levels}
        \For{$k = 0$ \textbf{to} $K$}
            
            \State set parent index $p\leftarrow \mathsf{Floor}(j/2)$
            \State obtain parent node value $\Omega^{P}_{CSMT}\leftarrow\Omega_{CSMT}[p,k]$
            \Statex\LineComment{identify MRP hop witness}
            \If{$B[k]=0$} \Comment{current node is left sibling}
                \State set $\Omega^k_{CSMT} = \{\Omega_{CSMT}[j,k],\Omega_{CSMT}[j+1,k],\Omega^P_{CSMT}\}$
            \Else \Comment{current node is right sibling}
                \State set $\Omega^k_{CSMT} = \{\Omega_{CSMT}[j-1,k],\Omega_{CSMT}[j,k],\Omega^P_{CSMT}\}$
            \EndIf
            \Statex\LineComment{generate MRP hop zk-SNARK}
            \State$\Pi^{(k)}_{\mathcal{A}^l} \leftarrow
            \verb|Prove|(
                pk_{\mathcal{A}^l},
                \mathcal{A}^l,
                \theta_{\mathcal{A}^l},
                \Omega^k_{CSMT},
                B[k],\eta
            )$
            \State append $\Pi^{(k)}_{\mathcal{A}^l}$ to $\Pi_{\mathsf{CSMT}}$
        \EndFor

        \Statex\LineComment{set last hash as CSMT root}
        \State $H^{root} \leftarrow H^k_P$

        \Statex\LineComment{return root hash, CSMT proofs}
        \State\Return $H^{root}$, $\Pi^u_{\mathsf{CSMT}},B$ 

    \EndFunction
 \end{algorithmic}
\end{algorithm}
 
\subsubsection{Integrated \cosmetic CRO Logic}:
We combine the algorithmic components discussed above into a standalone function denoted by $\verb|CosmeticCROBuild|$ as presented in Algorithm \ref{alg:cosmetic_CRO}. Without loss of generality, we consider a PHR database with an exhaustive user base $U$ such that $U^{inc}\subseteq U$ represents the set of users chosen to participate in the user study. The function $\verb|CosmeticCROBuild|$ consumes $U,U^{inc}$ in addition to verification keys $vk_{\mathcal{L}^s},vk_{\mathcal{A}^l}$ for LTR and MRP proofs to identify the relevant aggregation and leaf transform functions respectively. The function acquires the raw salted user data set based on $U$ and carries out a leaf transform using $\verb|LeafTransform|$, assigning a unique randomized transform salt to each user. Consequently, the function invokes $\verb|BuildSMT|$ using the selected group of users $U^{inc}$ to yield the root value and root hash denoted by $\Psi^{root},H^{root}$. Finally, the CRO distributes the hash of transform salt of each user $H^\tau_u$ and the verification keys $vk_{\mathcal{L}^s},vk_{\mathcal{A}^l}$ corresponding to the transform and aggregation functions for all users.  Next, the verification keys for leaf transform and aggregation along with the root hash are also made public. 

To prevent circuit substitution attacks, \cosmetic can also enable regulators to generate pre-approved circuits for common leaf and aggregation operations. Since the verification keys are deterministically derived from the circuit, any verifier can confirm that a regulator approved circuit was used before accepting proofs to prevent circuit substitution attacks. Additionally, any addition or removal of participants only require leaf value and root updates while reusing existing verification keys. A new trusted setup is required only if K itself grows. Since the architecture relies on a specific choice of the hash function with K bits, fluctuating trial enrollment does not require any new zkSNARK setup steps. We note that even though Algorithm \ref{alg:cosmetic_CRO} considers all users in the PHR database for effecting the leaf transforms, doing so is not necessary and is presented as such for representational convenience and to highlight the generation of both inclusion and exclusion proofs in later sections.  




\begin{algorithm}[htbp]
 \caption{CRO Component of \cosmetic}
 \label{alg:cosmetic_CRO}
 \begin{algorithmic}[1]
   \Function{CosmeticCROBuild}{$U,U^{inc},vk_{\mathcal{L}^s},vk_{\mathcal{A}^l}$}
        \State acquire user set $U$, salted raw data $\Delta^s_U$, transform salt array $T_U$ from PHR database
                \Statex
        \LineComment{load LTR and MRP proof generation artifacts}
        \State load compiled circuits $\mathcal{L}^s,\mathcal{A}^l$ and parameters $\theta_{\mathcal{L}^s},\theta_{\mathcal{A}^l}$
        \Statex
        \LineComment{build CSMT}
        \For{$u\in U$}%
        \State acquire $(\delta^u,\mu^u) \in \Delta^s_U$
        \State acquire transform salt $\tau^u$ for user $u$ from $T_U$
        \State $\varphi^{0}_u,H^{leaf}_u,H^{\tau}_u,\mathcal{N}_u\leftarrow$\Call{LeafTransform}{$\delta_u,\mu_u,\tau_u,\mathcal{L}^s,\theta_{\mathcal{L}^s}$}
        \EndFor
        
        \Statex \LineComment{build SMT based on included user set $U^{inc}$ in clinical study}
        \State  $\Psi^{root},H^{root}\leftarrow$ \Call{BuildSMT}{$\{\mathcal{N}_u,\varphi^{0}_u\}_{u\in U^{inc}},
                        K,\mathcal{A}^{l},
                        \theta_{\mathcal{A}^l}$}

        \Statex \LineComment{send transform salt hash and verification keys to users}
        \State distribute $[H^{\tau}_{u},]$ for all users $u\in\text{user set } U$
        \State publish $H^{root},,vk_{\mathcal{L}^s},vk_{\mathcal{A}^l}$ publicly
        \State \Return $\Psi^{root},H^{root}$
   \EndFunction
 \end{algorithmic}
\end{algorithm}
\subsection{Verifier Oriented Algorithmic Components}
\vspace{-2.5mm}
The algorithmic components on the Verifier are in charge of validating zk-SNARKs delivered by the CRO as a means to certify the inclusion of individual user datum consistent with the claimed leaf transform and aggregation functions. We note that the Verifier side logic is purely driven by the exchange of zk-SNARKs and the hash derivatives contained therein. We subdivide the entire task set into three different algorithmic components each pertaining to verifying individual LTR, and per-hop MRP zk-SNARKs, as well as validating inclusion proofs in the CRO CSMT. The same algorithmic structure provides support for both inclusion and exclusion proofs in the \cosmetic algorithmic framework. For brevity, in this section, we only discuss the main verifier logic represented in Algorithms \ref{alg:cosmetic_verifier} and \ref{alg:cosmetic_verifier} that relies on the primitives \texttt{LTRVerify}, \texttt{MRPHopVerify}, which have been discussed at length in Appendix \ref{apx:snarkver}. 

In Algorithm \ref{alg:cosmetic_verinc} we present the \texttt{VerInc} function that relies on the hashes of the salted raw user data record $H^{(\delta_u,\mu_u)}$, the user specific leaf transform $H^{leaf}_u$, CSMT root $H^{root}$ and the nonce $H^\eta$ in addition to the set of LTR and MRP zk-SNARK artifacts and their corresponding verification keys denoted by $\Pi^{u},vk_{\mathcal{L}^s},vk_{\mathcal{A}^l}$ respectively. The function enables the verification of the LTR zk-SNARK using the \texttt{LTRVerify} function. Next, we begin the process of verifying the MRP zk-SNARK artifacts by considering the binary path representation of $H^{leaf}_u$. We iterate over each hop of the CSMT starting from the leaf level with $H^{leaf}_u$ all the way to the root. At each level, we retrieve the right and left input hashes from the per-hop CSMT proof artifact represented by $\Pi_{\mathcal{A}^{l}}^k$. The right and left inputs are checked for consistency with respect to the selector bit corresponding to the current hop as well as CSMT level. Consequently, we validate the zk-SNARK of the hop using the \texttt{MRPHopVerify} function. Passing the validation criteria for every hop as well as the leaf transformation results in a successfully verifying the inclusion of a specific user's data in the clinical study.
We provide Algorithm \ref{alg:cosmetic_verifier} to summarize the algorithmic design at the verifier end. In Algorithm \ref{alg:cosmetic_verifier}, the verifier invokes \texttt{CRO-LTRProve} and \texttt{CRO-MRPProve} functions on the CRO. These CRO based functions can be implemented as an RPC call or a REST API functionality. The verifier then tests the inclusion using the function \texttt{VerInc}. Consistency of the selector bit path and the salted leaf transformation reported by \texttt{CRO-MRPProve} and \texttt{CRO-LTRProve} is critical to ensure that the generated proofs pertain to the same user datum. Additionally a successful outcome of the \texttt{VerInc} function concludes the verification process at the verifier.
\begin{algorithm}[htbp]
 \caption{Function for CSMT Inclusion Verification}
 \label{alg:cosmetic_verinc}
 \begin{algorithmic}[1]
    \Function{VerInc}{$H^{(\delta_u,\mu_u)},H^{\tau},H^{leaf}_u,\Pi^{u},H^{root},H^\eta,vk_{\mathcal{L}^s},vk_{\mathcal{A}^l}$}
        \State extract $\Pi^{u}_{\mathcal{L}^s}$ from $\Pi^{u}$ and $\Pi^{u,1:K}_{\mathcal{A}^l}$ from $\Pi^{u}$
        \State $\Phi_{\mathcal{L}^s}^u\leftarrow$ \Call{LTRVerify}{$vk_{\mathcal{L}^s}, \Pi_{\mathcal{L}^s}^u, H^{(\delta_u,\mu_u)},H^{\tau}, H^{leaf}_u$}
        
        \If{$\Phi_{\mathcal{L}^s}^u = 0$} \Comment{LTR cannot be verified}
        \State \Return False 
        \EndIf
        \State set binary path $B\leftarrow H^{leaf}_u$
        \State set $H^{curr} \leftarrow H^{leaf}_u$
        \For{$k = 1$ \textbf{to} $K$}
            \State retrieve MRP hop proof $\Pi_{\mathcal{A}^{l}}^k \leftarrow \Pi^{u,1:K}_{\mathcal{A}^l}[k]$ for level $k$
                \LineComment{find siblings from MRP hop proof based on selector bit}
                \If{$B[k] = 0$}
                    \State $H^R \leftarrow \Pi_{\mathcal{A}^{l}}^k[\mathsf{RightInput}]$ and $H^L \leftarrow H^{curr}$
                \Else
                   \State $H^L \leftarrow \Pi_{\mathcal{A}^{l}}^k[\mathsf{LeftInput}]$ and $H^R \leftarrow H^{curr}$
                \EndIf

                \Statex\LineComment{verify MRP hop proof}
                \State $\Phi_{\mathcal{A}^{l}}^{u,k} \leftarrow$ \Call{MRPHopVerify}{$vk_{\mathcal{A}^{l}},\Pi_{\mathcal{A}^{l}}^k,H^{L},H^{R},\mathsf{hash}(B[k]),H^{\eta}$}

                \If{$\Phi_{\mathcal{A}^{l}}^{u,k} = 0$} \Comment{MRP cannot be verified}
                \State \Return False
                \EndIf
                \Statex\LineComment{set the MRP hop output as current value}
                \State $H^{curr}\leftarrow \Pi_{\mathcal{A}^s}[\mathsf{Parent}]$
            
        \EndFor
        \LineComment{check root hash and selector path and leaf hash consistency}
        \If{$H^{curr} = H^{root}$ and $\mathsf{Decimal}(B) = H^{leaf}_u$} 
            \State \Return \texttt{True}
        \Else\Comment{inconsistent proving system}
            \State \Return  \texttt{False}
        \EndIf
    \EndFunction
 \end{algorithmic}
\end{algorithm}
\begin{algorithm}[htbp]
 \caption{Verifier Component of \cosmetic}
 \label{alg:cosmetic_verifier}
 \begin{algorithmic}[1]
    \Function{CosmeticVerifier}{u}
    \State load $H^{(\delta,\mu)}$ pertaining to user $u$ from PHR database
    \Statex\LineComment{recieved during proof generation}
    \State load $H^{\tau},vk_{\mathcal{L}^s},vk_{\mathcal{A}^l}$ sent by CRO
    \Statex\LineComment{invoke LTR prover remotely on CRO}
    \State $H^{leaf}$, $\mathcal{N}$, $\Pi_{\mathcal{L}^s}\leftarrow$\Call{CRO-LTRProve}{$H^{(\delta,\mu)},H^{\tau},vk_{\mathcal{L}^s}$}
    \Statex \LineComment{invoke MRP prover remotely on CRO}
    \State $H^{root},\Pi_{\mathcal{A}^l},B\leftarrow$\Call{CRO-MRPProve}{$H^{leaf},\mathcal{N},\eta,vk_{\mathcal{A}^l}$}
    \State consolidate LTR and MRP proofs $\Pi^{u}\leftarrow \{\Pi_{\mathcal{L}^s},\Pi_{\mathcal{A}^l}\}$
    \State $\Phi^{(\delta,\mu)}\leftarrow$\Call{VerInc}{$H^{(\delta,\mu)},H^{leaf},\Pi^{u},H^{root},H^\eta,vk_{\mathcal{L}^s},vk_{\mathcal{A}^l}$}
    \If{$\Phi^{(\delta,\mu)}=0$}
        \State inclusion verification failed
    \EndIf
    \EndFunction
 \end{algorithmic}
\end{algorithm}

\subsection{Security Model and Guarantees}
There are two fundamental formal guarantees that the \cosmetic framework delivers with respect to the CRO. First, using Proposition \ref{prs:csmt}, it establishes the CSMT architecture, providing necessary and sufficient conditions to evaluate the membership of a particular user's data in a reduction operation. Second, Proposition \ref{prs:pi_mem} formally proves that successful verification of LTR and MRP zk-SNARKs are necessary and sufficient to realize the membership function. Lastly, the algorithmic foundations of the \cosmetic framework are driven largely by Assumptions \ref{as:phr} - \ref{as:cro} which serve as necessary conditions for a successful and secure implementation of the framework.

From the algorithmic implementation standpoint, there are exactly two potential gaps where an incorrect evaluation of the membership function cannot be detected solely on the basis of Proposition \ref{prs:pi_mem}. The first gap pertains to the soundness of knowledge argument afforded by zk-SNARKs which inhibits the practical likelihood of obtaining proof artifacts consistent with Proposition \ref{prs:pi_mem} but generated using incorrect witnesses. The second gap focuses on the data exclusivity argument which pertains to the ability to exclusively commit to the set of users whose data was included in a particular reduction operation. The data exclusivity argument is an essential security component to prove the tamper-resistant aspects of the dataset used for the clinical study.

\subsubsection{Knowledge Soundness}: Our knowledge soundness argument is based on the implementation mechanisms for verification highlighted in Algorithm \ref{alg:cosmetic_verifier}. As a consequence, we establish Proposition \ref{prs:ks_lrt_mrp} which provides formal guarantees pertaining to the knowledge soundness of the zk-SNARK artifacts. 
\begin{proposition}\label{prs:ks_lrt_mrp}
The knowledge-soundness property of the underlying zero knowledge system guarantees that the membership (or non-membership) of user $u$ for the reduction operation $R(\mathcal{A},\mathcal{L},U^{inc})$ can be violated with only negligible probability.
\end{proposition}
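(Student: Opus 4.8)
The plan is a textbook knowledge-soundness reduction combined with collision resistance of $\mathsf{hash}$, reading the verification logic of Algorithm~\ref{alg:cosmetic_verinc} as the statement that is being proven. Suppose, for contradiction, that a malicious CRO $\mathcal{C}^\ast$ exists which, with non-negligible probability $\varepsilon$, outputs public values $H^{(\delta,\mu)}, H^{\tau}, H^{leaf}_u, H^{root}, H^{\eta}$ and a CSMT proof tuple $\Pi^u_{\mathsf{CSMT}} = [\Pi^u_{\mathcal{L}^s}, \Pi^{u,1:K}_{\mathcal{A}^l}]$ for which $\mathsf{VerInc}$ returns \texttt{True} --- equivalently $\Phi^u_{\mathcal{L}^s}=1$ and $\Phi^{u,k}_{\mathcal{A}^l}=1$ for all $k\in\{1,\ldots,K\}$, the hypothesis of Proposition~\ref{prs:pi_mem} --- yet the accepted membership bit disagrees with $\mathcal{M}([u,\delta]\,|\,R(\mathcal{A},\mathcal{L},U^{inc}))$ evaluated against the honestly published root commitment of Definition~\ref{def:csmt}. (Tamper-resistance of $U^{inc}$ itself, i.e.\ the data-exclusivity guarantee, rests on the PHR Merkle proofs of Assumption~\ref{as:phr} and is argued separately.) I would convert $\mathcal{C}^\ast$ into a PPT adversary that either breaks the knowledge-soundness of the underlying zk-SNARK or exhibits a collision (resp.\ a pre-image) of $\mathsf{hash}$; since both events occur with probability $\negl(\lambda)$, and a union bound over the $K+1 = \mathrm{poly}(\lambda)$ proofs absorbs the accounting, one obtains $\varepsilon \le \negl(\lambda)$, which is the claim.

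The engine of the reduction is witness extraction followed by a hash-chaining argument. Knowledge-soundness supplies, for each accepting proof, a PPT extractor returning a valid witness. From $\Pi^u_{\mathcal{L}^s}$ I extract $(\delta',\mu',\tau')$ with $\mathsf{hash}(\delta',\mu') = H^{(\delta,\mu)}$, $\mathsf{hash}(\tau') = H^{\tau}$ and $\mathsf{hash}\big(\mathcal{L}^s(\delta',\mu',\tau';\theta_{\mathcal{L}^s})\big) = H^{leaf}_u$; from each hop proof $\Pi^{u,k}_{\mathcal{A}^l}$ I extract $\Omega^k_{\mathsf{CSMT}} = \{\varphi^k_L,\varphi^k_R,\varphi^k_P\}$ together with the bound selector bit $B[k]$ and nonce $H^{\eta}$ such that $\varphi^k_P = \mathcal{A}^l(\varphi^k_L,\varphi^k_R;\theta_{\mathcal{A}^l})$ and the advertised input/output hashes match $\mathsf{hash}(\varphi^k_L),\mathsf{hash}(\varphi^k_R),\mathsf{hash}(\varphi^k_P)$. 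The verification loop of $\mathsf{VerInc}$ forces, level by level, that the parent hash emitted at level $k$ is the current hash fed to level $k{+}1$, that the leaf-level current hash equals $H^{leaf}_u$, and that the top-level current hash equals $H^{root}$; the sibling-orientation checks together with $\mathsf{Decimal}(B) = H^{leaf}_u$ pin the position of each sibling to the binary encoding of the leaf index $\mathcal{N}$ (Property~\ref{prop:leaf_path}). Hence the extracted data describe one fully determined root-to-leaf path of a CSMT whose root hash is precisely the published $H^{root}$.

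Next I would show that this extracted path must coincide with the CSMT produced by the honest procedure $\mathsf{BuildSMT}$ (Algorithm~\ref{alg:buildsmt}) on $U^{inc}$, except with negligible probability. If it did not, take the highest level $k^\star$ at which the extracted node value differs from the true node value; both hash to the same hash value occurring in the accepting transcript --- they agree at level $K$, both equal to $H^{root}$ --- so this pair of distinct pre-images is an explicit collision for $\mathsf{hash}$, ruled out by collision resistance. For the non-membership direction the same argument applies, now additionally invoking pre-image resistance (the hash properties underlying Definition~\ref{defn2} and Property~\ref{prop:leaf_unoccupied}): if $\mathsf{VerInc}$ accepts an exclusion proof while the honest tree stores a non-default leaf at index $\mathcal{N}$, the extracted default value $\mathcal{L}^s(\varnothing)$ and the true leaf value again collide under $\mathsf{hash}$. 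Given that the extracted path equals the real one, the extracted leaf value is genuinely $\varphi^0_{\mathcal{N}}$, so by Property~\ref{prop:leaf_identity} (inclusion) and Property~\ref{prop:leaf_unoccupied} (exclusion) the accepted bit equals the value of $\mathcal{M}([u,\delta]\,|\,R(\mathcal{A},\mathcal{L},U^{inc}))$ certified by Proposition~\ref{prs:csmt}, contradicting the assumption that it was violated. The verifier-chosen nonce $H^{\eta}$, bound into every hop proof, additionally prevents an adversary from recombining individually valid hops drawn from different users or queries into one coherent ``path''.

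The step I expect to be the main obstacle is precisely this composition/chaining argument: the $K+1$ zk-SNARKs are stitched together only through shared public hashes and the nonce, so I must rule out \emph{adaptor} strategies in which $\mathcal{C}^\ast$ supplies individually sound hop proofs whose implied sibling values are mutually inconsistent or correspond to no tree at all. Collision resistance is what closes this, but the write-up has to handle two subtleties carefully: that the knowledge extractors for the distinct proofs can be run jointly to yield a single consistent family of witnesses (the standard but non-trivial point about composing straight-line or rewinding extractors), and that the total failure budget --- extraction failure across the $K+1$ proofs plus the collision and pre-image terms --- still sums to $\negl(\lambda)$.
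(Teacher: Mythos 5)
Your proposal is correct and sits in the same family as the paper's argument (a knowledge-soundness reduction by contradiction against a compromised CRO), but the decomposition is genuinely different. The paper's proof is a local case analysis over the ways a witness can be corrupted: Case~1 (incorrect salted raw data in the LTR witness), Case~2.1 (incorrect selector-bit path, which it reduces back to Case~1 via the leaf hash), and Case~2.2 (incorrect hop inputs at some level $k$); for each case it defines a constraint set ($\mathcal{Z}^u_{LTR}$ or $\mathcal{Z}^{k,u}_{MRP}$) tying the proof's public inputs to the published hashes and then invokes Halo2 knowledge soundness to conclude that an accepting proof with an inconsistent witness occurs with probability at most $\negl(\lambda)$. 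You instead run a single global extract-then-chain argument: extract witnesses from all $K+1$ accepting proofs, observe that \textsf{VerInc} forces the parent hash of each hop to be the current hash of the next, and then use collision resistance of \texttt{hash} (plus pre-image resistance for the exclusion direction) to argue the extracted path must coincide with the honest \textsf{BuildSMT} tree, taking a union bound over the $K+1$ extraction and collision events. Your route makes explicit two things the paper leaves implicit in its constraint-set definitions: that knowledge soundness only guarantees the extracted witness satisfies the circuit, so binding it to the \emph{honest} tree genuinely requires a hash-collision argument at the highest differing level; and that the failure probabilities across the $K+1$ proofs must be summed. The paper's case split, by contrast, buys a more direct mapping to the concrete checks in Algorithms~\ref{alg:csmt_ltr_proof_verify} and~\ref{alg:csmt_mrp_proof_verify} and isolates the selector-bit-path attack as its own case, which your write-up folds into the chaining step. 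Neither approach has a gap the other exposes, though your explicit treatment of extractor composition and the ``adaptor'' recombination attack (blocked by the nonce and the parent-hash chaining) is a useful strengthening of what the paper states only through Equations~\eqref{eq:verinc7}--\eqref{eq:verinc8}.
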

\begin{proof}
    Proof given in Appendix \ref{subsec:proof_ks_lrt_mrp}
\end{proof}

 Fundamentally, the proposition explores the ability for such a CRO to use incorrect LTR witnesses, inconsistent leaf indices or manipulated MRP proof artifacts for one or more hops as a means to generate zk-SNARKs that can be verified using Algorithm \ref{alg:cosmetic_verifier}. The knowledge-soundness of each LTR and MRP proof instance follows directly from the formally established knowledge-soundness of the Halo2 proof system~\cite{halo2_zcash_2023}, which guarantees that any adversary producing a valid proof must possess a valid witness, except with negligible probability. Therefore a compromised CRO cannot practically generate zk-SNARK artifacts consistent with Proposition~\ref{prs:pi_mem} without access to genuine PHR-authenticated witnesses.  

  The \cosmetic framework employs a deliberate two-level commitment structure wherein the root hash $H^{root} = \mathsf{hash}(\mathcal{A}^s(\Delta^{\mathcal{L}^s}_U))$ commits to the aggregate statistical outcome $R(\mathcal{A}, \mathcal{L}, U)$ rather than to the full dataset. Commitment to individual participants is not derived from the root but from the LTR proofs wherein each leaf hash $H_i = \mathsf{hash}(\mathcal{L}^s(\delta_i, \mu_i, \tau_i))$ is collision-resistant by Definition~\ref{defn2}, and the corresponding zk-SNARK binds that leaf to its PHR-authenticated tuple $(\delta_i, \mu_i, \tau_i)$. The MRP proofs then chain each leaf commitment to the root. Root-level and leaf-level collision resistance therefore serve distinct purposes and should not be treated as the same security property. In other words, the root demonstrates the integrity of the  aggregation, while the leaves prove commitment and chaining to legitimate users who are part of the PHR database. 

\subsubsection{Data Exclusivity}: To demonstrate the data exclusivity argument, we consider $U^{inc}\subseteq U$ as having been the set of users whose data a compromised CRO claims to use for its clinical study while actually utilizing $\hat{U}^{inc}$ with $\hat U^{inc} \cap U^{inc} \neq \varnothing$. We formally define data exclusivity argument in Definition \ref{def:data_exc}. 
\begin{definition}\label{def:data_exc}
    Given membership function $\mathcal{M}$ and a reduction function $R$, a clinical study is deemed to be data exclusive to a set of users $U^{inc}\subseteq U$ if and only if the following conditions hold
    \[
    \nexists u \in U^{inc} \text{ and } \mathcal{M}([u,\delta_u]|R) = 1,
    \]
    where $\delta_u$ is the raw user data record for any user $u$.
\end{definition}
Based on Definition \ref{def:data_exc} we derive necessary conditions for data exclusivity as presented in Proposition \ref{prs:exc_check}. A prerequisite for Proposition \ref{prs:exc_check} is to check the provenance of the leaves by ensuring that every non-default leaf must be the result of a leaf transform on a raw data record that belongs to a PHR database. However, verifying the provenance of each leaf is trivial and straightforward based on Algorithm \ref{alg:csmt_ltr_proof_verify}.
\begin{proposition}\label{prs:exc_check}
Given a set of users $U$ with corresponding ordered salted leaf transformed set denoted by $\Delta^{\mathcal{L}^s}_U$, data exclusivity is guaranteed if the ordered leaf set corresponding to every subtree rooted at a non-default, non-leaf is a subset of $\Delta^{\mathcal{L}^s}_U$. 
\end{proposition}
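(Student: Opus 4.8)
The plan is to prove the statement by contradiction, reducing the semantic notion of membership to the structural layout of the CSMT and then exploiting uniqueness of the salted leaf transform. Suppose the subtree condition in the statement holds for the CSMT underlying the published commitment $H^{\text{root}}$, yet data exclusivity to $U^{inc}$ fails (Definition~\ref{def:data_exc}); then there is a user $\hat u \notin U^{inc}$ with $\mathcal{M}([\hat u,\delta_{\hat u}]\,|\,R)=1$, where $R = \varphi^{root}$ is the global reduction the tree commits to. The first step is to translate this via Proposition~\ref{prs:csmt}: $\mathcal{M}([\hat u,\delta_{\hat u}]\,|\,R)=1$ is equivalent to CSMT inclusion of $\hat u$, i.e. the salted leaf transform $\hat\varphi := \mathcal{L}^{s}(\delta_{\hat u},\mu_{\hat u},\tau_{\hat u})$ occupies, as a non-default value, the leaf position $\mathcal{N}_{\hat u}$ that Property~\ref{prop:leaf_identity} assigns to it in that very CSMT.

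The core step then confines this rogue leaf inside a subtree that the hypothesis constrains. Because a genuine record cannot equal the empty-default value $\mathcal{L}^{s}(\varnothing)$ (Property~\ref{prop:leaf_unoccupied}) except on a hash collision, the leaf at $\mathcal{N}_{\hat u}$ is non-default, and therefore so is every node on its path to the root in the recursive reduction of Definition~\ref{def:csmt}; in particular the root itself --- which, for any study with at least one participant, is a non-default, non-leaf node whose subtree is the whole tree --- falls under the hypothesis. Hence the ordered non-default leaf set of the entire CSMT is a subset of the claimed transform set $\Delta^{\mathcal{L}^{s}}_{U^{inc}} = \{\mathcal{L}^{s}(\delta_v,\mu_v,\tau_v) : v \in U^{inc}\}$, so $\hat\varphi = \mathcal{L}^{s}(\delta_v,\mu_v,\tau_v)$ for some $v \in U^{inc}$, and by Property~\ref{prop:leaf_identity} this $v$ is forced to the same position $\mathcal{N}_{\hat u}$. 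Since $\hat u \notin U^{inc}$ we have $v \neq \hat u$, which contradicts Property~\ref{prop:leaf_uniquenes}: the secret per-user transform salt makes salted leaf transforms pairwise distinguishable across users, so $\mathcal{L}^{s}(\delta_v,\mu_v,\tau_v) = \mathcal{L}^{s}(\delta_{\hat u},\mu_{\hat u},\tau_{\hat u})$ can occur only through a collision of the hash or of the salt binding, i.e. with negligible probability. I would present the condition in the ``every subtree'' form rather than merely at the root because that is precisely what lets a verifier re-derive this conclusion locally, hop by hop, from the MRP zk-SNARKs, while also certifying that each internal aggregate $\varphi^{k}_{P}$ is the reduction of exactly the leaves beneath it.

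The main obstacle I anticipate is reconciling the value-level reading of a ``default, non-leaf'' node used in the hypothesis with the structural reading (every leaf beneath it equals $\mathcal{L}^{s}(\varnothing)$) that the confinement step silently relies on: a compromised CRO might, in principle, pick sibling values so that $\mathcal{A}^{l}$ maps a non-default leaf together with its sibling onto the all-default aggregate, producing a rogue leaf whose ancestors all \emph{appear} default and so slip past the hypothesis. Closing this gap needs the leaf-provenance prerequisite --- every non-default leaf is a verified leaf transform of an actual PHR record via Algorithm~\ref{alg:csmt_ltr_proof_verify} --- together with collision resistance of the hash that the CSMT wraps around every aggregate, which is exactly the knowledge-soundness guarantee of Proposition~\ref{prs:ks_lrt_mrp}; jointly these force the claimed node values to track the structural leaf contents except with probability $\negl(\lambda)$. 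Accordingly, and consistently with the other guarantees in this section, I would state the final result as holding with overwhelming probability over the zk-SNARK and common-reference-string setup.
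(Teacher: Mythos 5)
Your proof is correct in substance but follows a genuinely different route from the paper's. The paper argues bottom-up about the binding of subtree commitments: it fixes an arbitrary ordered leaf subset $\mathcal{D}_q \subseteq \Delta^{\mathcal{L}^s}_U$ with subtree root $H^{root}_{\mathcal{D}_q}$, adjoins a single extra leaf $\hat{\phi}$, invokes Property~\ref{prop:leaf_uniquenes} to conclude that the recursive aggregate $\mathcal{A}^s(\cdot)$ must change, and then uses the hash properties to conclude the subtree root must change --- so a spurious leaf cannot be absorbed into any subtree without altering its committed root, which is exactly the invariant that Algorithm~\ref{alg:csmt_data_exclusivity} checks hop by hop. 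You instead work top-down from a hypothesized failure of Definition~\ref{def:data_exc}: translate the rogue membership into structural leaf occupancy via Proposition~\ref{prs:csmt}, use the (root-level instance of the) subtree hypothesis to force the rogue transform into $\Delta^{\mathcal{L}^s}_U$, and contradict Property~\ref{prop:leaf_uniquenes}. Each route buys something the other lacks. Your version connects the hypothesis directly to the stated conclusion (the paper's proof never actually invokes Definition~\ref{def:data_exc}), and you correctly flag --- and patch via provenance plus knowledge soundness --- the ``default-looking aggregate'' loophole, which the paper does not discuss. The paper's version, on the other hand, genuinely exercises the \emph{every}-subtree quantifier (your argument, as you note, only needs it at the root) and establishes the stronger local binding fact that makes the condition checkable. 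One caveat cuts against the paper rather than you: its step $\mathcal{A}^s(\mathcal{D}_q) \neq \mathcal{A}^s(\hat{\mathcal{D}}_q)$ does not follow from leaf uniqueness alone for a general aggregator (a sum can coincide on distinct multisets), so it implicitly leans on the same hash-collision-resistance and $\negl(\lambda)$ hedging that you make explicit; your decision to state the conclusion as holding with overwhelming probability is the more honest formulation.
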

\begin{proof}
Proof given in Appendix \ref{subsec:proof_exc_check}
\end{proof}

Proposition \ref{prs:ks_lrt_mrp} provides the set of criteria which can be used by the CRO to prove data exclusivity based on analyzing the set of non-default leaves that appear across the entire CSMT. 
We note however that to implement Proposition \ref{prs:exc_check} the CRO would have to divulge the LTR and MRP zk-SNARKs of all users that are part of the set $U^{inc}$ whose data was included in the clinical study. In Appendix \ref{subsec:data_exc_exp}, we provide a detailed discussion for strategies that can successfully detect violations in data exclusivity including Algorithm \ref{alg:csmt_data_exclusivity} which outlines an implementation mechanism for checking the provenance and data exclusivity of a particular reduction operation.  

\section{Experimental Results}
We present the computational experiments for evaluating \cosmetic using three real world clinical case studies that are driven by two distinct clinical datasets. All experiments were conducted on a virtual machine (VM) running Ubuntu~24.04, provisioned with 16 vCPUs and 100~GB of RAM. The implementation was carried out in Python~3.11, with model inference performed using PyTorch~2.7.1. Zero-knowledge proofs were generated using the \texttt{ezkl} library, which underlies the zkSNARK construction within the \cosmetic framework. All experiments were evaluated under multiple fixed-point precision configurations, controlled by the scale parameter. We consider scales of 8, 10, 12, and 14, which directly determine the arithmetic precision inside the zero-knowledge circuits. This range allows us to evaluate the impact of increasing cryptographic precision on performance, numerical stability, and cryptographic overhead. Our source code and documentation can be accessed at \url{https://disys-lab.github.io/cosmetic/}.

\subsection{Clinical Case Studies} \label{sec:5.1}
 
\noindent \textit{Evaluation Scope}: We evaluate \cosmetic across three representative statistical workflows commonly encountered in clinical and genomic studies that pertain to the two-sample Kolmogorov-Smirnov (KS) test, the Logistic Likelihood-Ratio Test (LRT), and the Logistic Accuracy (Acc) test. Our evaluation focuses on three distinct dimensions involving the cost of circuit compilation and witness generation, the stability of the resulting statistical outputs and the cryptographic overhead incurred through proving and verification keys. Together, these metrics characterize the practicality of deploying \cosmetic in privacy-sensitive analytical pipelines. To drive the case studies, we utilize datasets from two real-world cases pertaining to Huntington’s disease (HD) and immunodeficiency virus type 1 (HIV-1).

\noindent\textit{Example Dataset 1: Huntington's disease}. As a genomic example, we use CAG repeat-length data in a two-sample KS test comparing healthy controls and individuals with clinical Huntington's disease. Individual values were simulated from published summary statistics \cite{Moily2014_wn,Jiang2014_we,Gardiner2017_vg,Vater2025_fq}, yielding two cohorts of 12 each. This dataset illustrates distributional testing under CoSMeTIC's inclusion and exclusion guarantees. Brief disease background is given in Appendix \ref{apx:hd}.

\noindent\textit{Example Dataset 2: HIV-1 Temsavir resistance}. As a clinical lab example, we use HIV-1 resistance records from a clinical lab test. CoSMeTIC applies an LRT to compare full and reduced logistic-regression models and an ACC evaluation on a held-out partition. Each sample was associated with the amino acid sequence at four Env positions (375, 426, 434, 475), and a resistance value measured in vitro using the PhenoSense GT assay. After dummy encoding, the feature space expanded to 19 variables.
As part of our study, we examine the Env-targeting therapeutic BMS-626529 (Temsavir, TMR) \cite{Markham2020, Wang2018_xk}. This example illustrates model comparison and predictive evaluation with the same accountability guarantees. Brief clinical background is given in Appendix \ref{apx:hiv}.

For Huntington's we utilize two independent cohorts consisting of 12 individuals each, representing the healthy and diseased (HD) groups. A two-sample KS test is implemented using the \cosmetic framework applied to assess distributional differences between the two groups. In the second case study related to HIV-1, we randomly split the data (n=564) into training (n=479) and test (n=85) sets. For each of the LRT and the ACC tests, a subset of (n=12) participants is used for the experiments with the \cosmetic framework. The architecture of the CSMT within the \cosmetic framework for the KS, LRT and ACC tests has been illustrated in Figure \ref{fig:three_workflows_row} presented in Appendix \ref{apx:algorithms}.
 

\begin{table}[!htb]
\setlength{\tabcolsep}{2pt}
\centering
\begin{tabular}{|l|c|c|}
\hline
\textbf{Step} & \textbf{Included $u$} & \textbf{Excluded $u'$} \\
\hline
Leaf index & $\mathcal{N}_u$ & $\mathcal{N}_{u'}$ \\
Leaf content $\varphi^0_{\mathcal{N}}$ & $\mathcal{L}^s(\delta_u,\mu_u,\tau_u)$ & $\mathcal{L}^s(\varnothing)$ \\
Governing Property & Property ~\ref{prop:leaf_identity} & Property ~\ref{prop:leaf_identity} + \ref{prop:leaf_unoccupied} \\
LTR witness & genuine $(\delta_u,\mu_u,\tau_u)$ & none (default, public) \\
MRP hops $1{:}K$ & pass, all $K$ & pass, all $K$ \\
Root check & $H^{curr}{=}H^{root}$ match & $H^{curr}{=}H^{root}$ match \\
$\mathcal{M}(\cdot\,|\,R)$ & $1$ & $0$ \\
Certifies & membership & non-membership \\
\hline
\end{tabular}
\caption{Worked inclusion and exclusion traces through the CSMT properties underlying Algorithm \ref{alg:cosmetic_verinc}. Both cases share identical path-derivation and aggregation logic; only the leaf content and presence of a genuine LTR witness differ.}
\label{tab:worked_example}
\end{table}

\subsection{Worked Inclusion and Exclusion Example}\label{sec:worked_example}
We illustrate Algorithm \ref{alg:cosmetic_verinc} concretely using two cases drawn from the Huntington's disease KS case study (Section \ref{sec:5.1}) comprising an included participant $u$ of the HD cohort, and an identity $u'$ that was never part of either cohort. Both cases compute the leaf index in the same manner of Property \ref{prop:leaf_identity}, $\mathcal{N} = \text{Decimal}[\text{hash}(\mathcal{L}^s(\delta,\mu,\tau))]$, and the same path-derivation logic of Property \ref{prop:leaf_path} from that index to the root; \cosmetic does not branch into separate inclusion and exclusion algorithms. The two cases differ only in what is occupying the leaf at the resulting index.

\noindent\textit{Inclusion case}: For participant $u$, the leaf at index $\mathcal{N}_u$ holds $\varphi^0_{\mathcal{N}_u} = \mathcal{L}^s(\delta_u,\mu_u,\tau_u)$, the genuine salted leaf transform of $u$'s datum (Property \ref{prop:leaf_identity}). \texttt{LTRVerify} returns $\Phi_{\mathcal{L}^s}^{u}=1$ against this leaf, and each of the $K$ MRP hops along the path verifies via \texttt{MRPHopVerify}, recomputing $H^{curr}$ from the sibling supplied at each level. The final $H^{curr}$ matches $H^{root}$, so \texttt{VerInc} returns \texttt{True} which is indicative of the fact that $u$'s genuine data was used in the analysis, which guarantees that $\mathcal{M}([u,\delta_u]\,|\,R) = 1$.

\noindent\textit{Exclusion case}: For identity $u'$, Property \ref{prop:leaf_identity} gives the leaf hash $H_{u'}$ and index $\mathcal{N}_{u'}$ exactly as it would for a genuine participant, so the path to be checked is encoded from $u'$'s real hash. The leaf at $\mathcal{N}_{u'}$, however, holds the default value $\varphi^0_{\mathcal{N}_{u'}} = \mathcal{L}^s(\varnothing)$. By Property \ref{prop:leaf_unoccupied}, this proves $u'$'s exclusion directly indicating that the path is genuine, but the source at that leaf is the default, not a real LTR witness. The $K$ MRP hops from this default leaf to the root verify exactly as in the inclusion case, and $H^{curr}$ matches $H^{root}$, certifying $\mathcal{M}([u',\delta_{u'}]\,|\,R) = 0$. \autoref{tab:worked_example} summarizes both traces.

\subsection{Case Study Results}
We present the results of our case studies in terms of the witness generation times as well as the computational system performance of the end-to-end LTR and MRP proving mechanism for KS, LR and ACC Tests. We present the proving and verification keys sizes for each case study in Appendix \ref{apx:sup_res}.
\subsubsection{Kolmogorov-Smirnov (KS) Test}: As reported in \autoref{tab:ks_breakdown}, circuit compilation and witness generation times remain tightly bounded across all \texttt{EZKL Scale} configurations. Increasing arithmetic precision introduces only minor fluctuations in total generation time, indicating that higher fixed-point precision does not significantly impact the computational cost of the KS workflow. More importantly, the computed max-gap statistic remains invariant across all precision settings. This invariance demonstrates that increasing zero-knowledge precision does not distort the underlying non-parametric hypothesis test, confirming that \cosmetic preserves numerical correctness under cryptographic constraints.
\begin{table}[!htb]
\setlength{\tabcolsep}{1.5pt}
\centering
\begin{tabular}{|c|cc|cc|c|}
\hline
\rule{0pt}{5ex}
\textbf{\shortstack{EZKL\\Scale}} &
\multicolumn{2}{c|}{\makebox[1cm][c]{\textbf{Model 1}}} &
\multicolumn{2}{c|}{\makebox[1cm][c]{\textbf{Model 2}}} &
\textbf{\shortstack{Total\\Time}}\\
\cline{2-5}
 & \textbf{Circuit} & \textbf{Witness} & \textbf{Circuit} & \textbf{Witness}& \textbf{ } \\
\hline
8  & 307.53 & 1070.87 & 292.40 & 1082.41 & 2916.16 \\
10 & 303.07 & 1063.42 & 306.40 & 1070.08 & 2867.4  \\
12 & 301.02 & 1069.79 & 299.44 & 1078.96 & 2874.8  \\
14 & 291.00 & 1080.10 & 307.74 & 1081.95 & 2892.24 \\
\hline
\end{tabular}
\caption{KS circuit and witness generation time (seconds) across EZKL scales}
\label{tab:ks_breakdown}
\end{table}
\begin{table}[!htb]
\setlength{\tabcolsep}{3pt}
\centering
\begin{tabular}{|c|cc|cc|c|}
\hline
\rule{0pt}{5ex}
\textbf{\shortstack{EZKL\\Scale}} &
\multicolumn{2}{c|}{\textbf{Full Model}} &
\multicolumn{2}{c|}{\textbf{Reduced Model}} &
\textbf{\shortstack{Total\\Time}} \\
\cline{2-5}
 & \textbf{Circuit} & \textbf{Witness} &
   \textbf{Circuit} & \textbf{Witness} & \textbf{ } \\
\hline
8  & 297.63 & 971.44 & 293.57 & 975.65 & 2641.62 \\
10 & 267.61 & 883.93 & 271.59 & 880.93 & 2394.99 \\
12 & 270.30 & 887.07 & 258.56 & 869.07 & 2459.20 \\
14 & 274.54 & 888.47 & 276.65 & 896.43 & 2523.01 \\
\hline
\end{tabular}
\caption{LRT circuit and witness generation time (seconds) across EZKL scales}
\label{tab:lrt_breakdown}
\end{table}

\begin{table}[!htb]
\setlength{\tabcolsep}{2pt}
\centering
\begin{tabular}{|c|cc|cc|c|}
\hline
\rule{0pt}{6ex}
\textbf{\shortstack{EZKL\\Scale}} &
\multicolumn{2}{c|}{\textbf{Length}} &
\multicolumn{2}{c|}{\textbf{Accuracy}} &
\textbf{\shortstack{Total\\Time}} \\
\cline{2-5}
 & \textbf{Circuit} & \textbf{Witness} & \textbf{Circuit} & \textbf{Witness} &  \\
\hline
8  & 343.72 & 1181.16 & 323.31 & 1163.35 & 3011.54 \\
10 & 341.87 & 1171.05 & 292.86 & 1154.64 & 2960.42 \\
12 & 345.77 & 1080.57 & 324.30 & 1122.04 & 2872.68 \\
14 & 347.23 & 1089.28 & 325.69 & 1178.08 & 2940.28 \\
\hline
\end{tabular}
\caption{ACC circuit and witness generation time (seconds) across EZKL scales}
\label{tab:acc_breakdown}
\end{table}


\begin{table}[!htb]
\setlength{\tabcolsep}{4pt}
\centering
\begin{tabular}{|c|ccc|ccc|}
\hline
\rule{0pt}{5ex}
\textbf{\shortstack{EZKL\\Scale}} &
\multicolumn{3}{c|}{\textbf{Max Gap}} &
\multicolumn{3}{c|}{\textbf{LRT Statistic}} \\
\cline{2-7}
 & \textbf{PK} & \textbf{VK} & \textbf{Time} &
   \textbf{PK} & \textbf{VK} & \textbf{Time} \\
\hline
8  & 13.24 & 3.07 & 162.95 & 10.29 & 2.43 & 103.31 \\
10 & 13.24 & 3.07 & 124.43 & 10.29 & 2.43 & 125.65 \\
12 & 13.24 & 3.07 & 125.59 & 10.29 & 2.43 & 174.20 \\
14 & 13.24 & 3.07 & 131.45 & 10.29 & 2.43 & 186.93 \\
\hline
\end{tabular}
\caption{KS and LRT statistics Proving(GB), verification(MB) and generation time (seconds) key sizes across EZKL Scale}
\label{tab:derived_keysize}
\end{table}
\begin{figure*}[!htb]
\centering
\noindent
\subfigure[KS Test]{
  \includegraphics[width=0.31\textwidth,keepaspectratio]{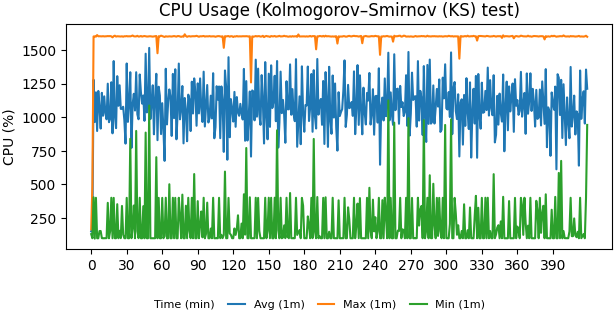}
  \label{fig:cpu_ks}
}
\subfigure[LRT Test]{
  \includegraphics[width=0.31\textwidth,keepaspectratio]{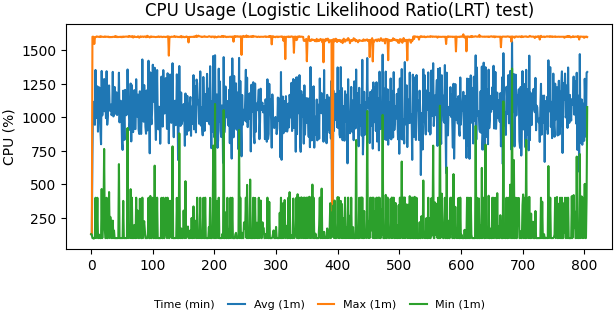}
  \label{fig:cpu_lrt}
}
\subfigure[ACC Test]{
  \includegraphics[width=0.31\textwidth,keepaspectratio]{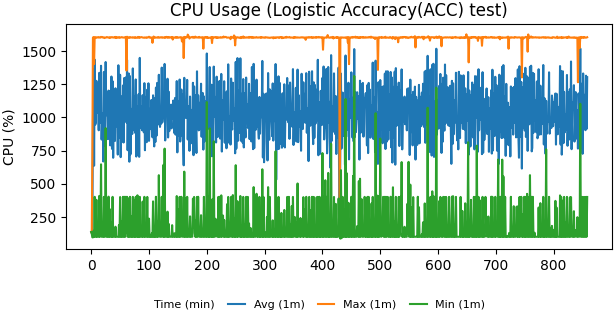}
  \label{fig:cpu_acc}
}
\caption{Prover CPU utilization during KS, LRT, and ACC proof generation.}
\label{fig:cpu_metrics}
\end{figure*}

\begin{figure*}[!htb]
\centering
\noindent
\subfigure[KS Test]{
  \includegraphics[width=0.31\textwidth,keepaspectratio]{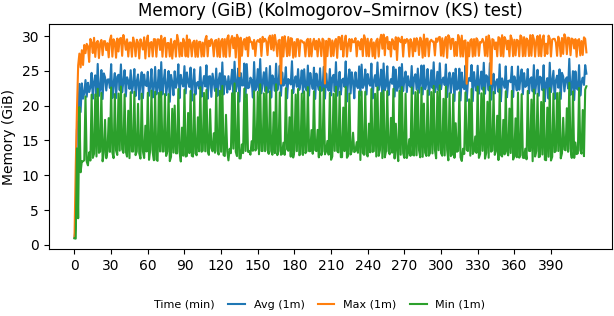}
  \label{fig:memory_ks}
}
\subfigure[LRT Test]{
  \includegraphics[width=0.31\textwidth,keepaspectratio]{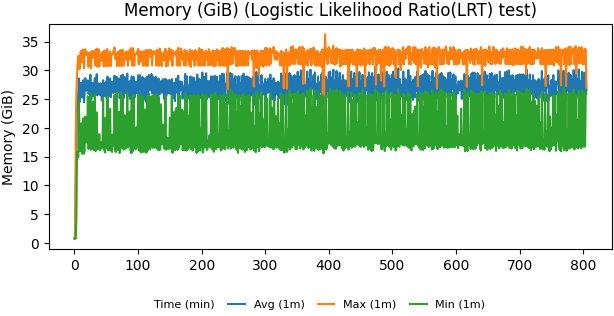}
  \label{fig:memory_lrt}
}
\subfigure[ACC Test]{
  \includegraphics[width=0.31\textwidth,keepaspectratio]{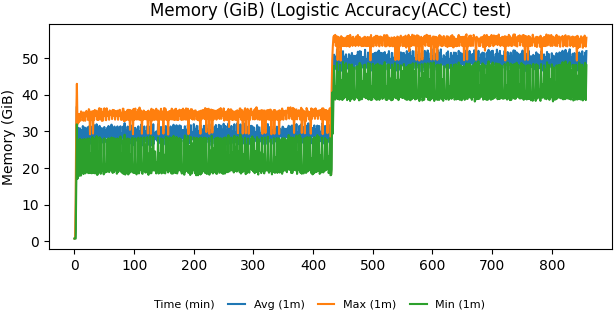}
  \label{fig:memory_acc}
}
\caption{Prover memory utilization during KS, LRT, and ACC proof generation.}
\label{fig:memory_metrics}
\end{figure*}

\subsubsection{Logistic Likelihood-Ratio Test (LRT)}: As shown in \autoref{tab:lrt_breakdown}, both circuit and witness generation times remain stable as the \texttt{EZKL Scale} increases. Despite higher arithmetic precision, the LR statistic is preserved exactly across all configurations, demonstrating that \cosmetic maintains numerical fidelity for parametric hypothesis testing under zero-knowledge constraints. This property is particularly important in regulated settings, where even small numerical deviations can affect downstream clinical or regulatory decisions.

\subsubsection{Logistic Accuracy (ACC) Test}: As reported in \autoref{tab:acc_breakdown}, both transformers incur relatively modest circuit and witness generation times compared to the KS and LRT workflows. Across all \texttt{EZKL Scale} values, the observed classification accuracy remains unchanged, indicating that zero-knowledge proof construction does not interfere with model evaluation outcomes. Moreover, we notice that the total generation times vary only slightly with increasing precision, suggesting predictable performance behavior. This confirms that repeated accuracy verification can be supported efficiently within the \cosmetic framework.

Across all evaluated statistical workflows, \cosmetic demonstrates strong scalability and numerical stability with increasing zero-knowledge precision. With respect to statistical stability, across all zkSNARK precision scales ranging from 8-14 setting, all three statistical outputs remain identical with the KS Max Gap statistic is 0.999, the LRT statistic is 386.78, and the ACC accuracy is 0.749. This invariance shows that \cosmetic's quantization introduces no numerical drift regardless of precision level. Additionally, our results also demonstrate stability in circuit and witness generation times across diverse scale values as well while preserving exact statistical outputs. These results collectively confirm that \cosmetic enables privacy-preserving statistical inference without compromising numerical correctness or cryptographic scalability.


\noindent\textbf{System-level CPU utilization}: Figures~\ref{fig:cpu_ks}--\ref{fig:cpu_acc} show the prover CPU utilization during KS, LRT, and ACC proof generation, respectively. For the KS test (Figure~\ref{fig:cpu_ks}), CPU usage exhibits periodic fluctuations, reflecting alternating computation phases associated with histogram aggregation and cumulative distribution comparison within the KS circuit. Despite these fluctuations, overall CPU utilization remains consistently elevated, indicating sustained computational activity throughout the proving process.

The LRT workload (Figure~\ref{fig:cpu_lrt}) demonstrates a more uniform CPU utilization profile, corresponding to continuous arithmetic computation required to evaluate both the full and reduced logistic regression models within the zero-knowledge circuit. The absence of pronounced oscillations suggests a steady, compute-bound execution pattern.

Similarly, the ACC workflow (Figure~\ref{fig:cpu_acc}) shows sustained CPU utilization across the entire execution window, indicating continuous computation during both the length SMT proof phase and the subsequent accuracy proof phase. The lack of extended idle periods suggests that ACC proof generation maintains a stable computational workload across phases.

\noindent\textbf{System-level memory utilization.}
Figures~\ref{fig:memory_ks}--\ref{fig:memory_acc} present the prover memory utilization during KS, LRT, and ACC proof generation. For the KS test (Figure~\ref{fig:memory_ks}), memory usage remains stable throughout execution, with minimal divergence between minimum, average, and maximum values. This behavior indicates that KS proof generation relies on a fixed working set after initialization and does not require repeated large memory allocations.

In the LRT case (Figure~\ref{fig:memory_lrt}), memory utilization also exhibits a steady profile with limited variation over time. The absence of large memory spikes suggests efficient reuse of intermediate buffers during witness generation and likelihood computation.

For the ACC workflow (Figure~\ref{fig:memory_acc}), memory usage remains stable during the initial execution period and increases when the system transitions from length SMT proof generation to accuracy proof generation. After this transition, memory utilization stabilizes at a higher level, indicating predictable, phase-dependent memory usage rather than unbounded allocation
growth.

\section{Conclusion}
In this paper, we introduce \cosmetic, a zero-knowledge framework based on Computational Sparse Merkle Trees (CSMTs) that enables verifiable statistical computation while preserving patient privacy in clinical research. By leveraging Merkle tree structure and computational reduction operations, CSMTs support succinct inclusion and exclusion proofs for individual users whose data participate in specific stages of a statistical analysis. \cosmetic integrates zk-SNARKs to provide end-to-end, publicly verifiable guarantees of data membership and exclusion across the computational pipeline. We formalize soundness conditions for these proofs using state-of-the-art proving systems such as Halo2 and demonstrate that \cosmetic can also verify claims of exclusive data usage from subsets of large medical databases. Through diverse real-world case studies, we show that the framework is extensible to a wide range of statistical methods used in clinical research.

 Unlike prior zero-knowledge Merkle tree systems, which treat leaves as static key-value records and certify only lookup or boolean membership, and unlike verifiable-statistics frameworks, which certify a global test statistic without binding it to individual participants, \cosmetic is the first to combine analysis-specific computational reductions at internal nodes with participant-level inclusion and exclusion proofs for the same statistical pipeline. 
To evaluate \cosmetic, we design three representative use cases using real-world clinical data from Huntington’s disease (HD) and HIV-1 studies. First, we implement a two-sample Kolmogorov–Smirnov (KS) test to verify whether CAG repeat length distributions differ between HD patients and healthy controls, while proving that only approved cohort data were used and all non-participants were excluded. Second, we construct a publicly verifiable likelihood ratio test (LRT) for HIV-1 resistance to Temsavir using nested logistic-regression models to assess the significance of selected genomic features. Third, we extend the framework to support verifiable evaluation of predictive accuracy under the same inclusion and exclusion guarantees. Across multiple zero-knowledge scales, we demonstrate that circuit and witness generation times, statistical outputs, and key sizes remain practical and stable, showing that privacy-preserving hypothesis testing can be achieved without sacrificing computational efficiency or statistical validity.

Finally, we show that \cosmetic enables regulators to verify the authenticity and correct use of datasets underlying clinical studies, while allowing participants to obtain publicly verifiable proofs of data inclusion or exclusion. By combining correctness, privacy, and transparency, \cosmetic offers a practical and scalable foundation for trustworthy, auditable, and privacy-preserving clinical research.

\begin{acks}
\noindent \textit{Funding Support}: This material is based upon work supported by the National Science Foundation (NSF) under Grant No. 2348411 and by National Institutes of Health (NIH) grant R01 AI170205.
\end{acks}
\bibliographystyle{ACM-Reference-Format} 
\bibliography{main}

\appendix

\section{Proofs}
\subsection{Proof of Proposition \ref{prs:csmt}}\label{subsec:proof_csmt}
\noindent Case 1.1 (Inclusion implies Membership): For this case it is sufficient to prove that $(u\in U)\implies \mathcal{M}([u,\delta]|R) =1$. If we consider $u\in U$, Properties \ref{prop:user_identity} and \ref{prop:leaf_identity} imply the existence of unique user and transform salts given by $\mu,\tau$ respectively. Leveraging Properties \ref{prop:leaf_identity} and \ref{prop:leaf_unoccupied}, we know that the data tuple $(\delta,\mu,\tau)$ for user $u\in U$ corresponds to a unique leaf index given by $\mathcal{N}_u$ such that the binary representation of $\mathcal{N}_u$ is equal to $\verb|hash|(\mathcal{L}^s(\delta,\mu,\tau))$. Given the collision resistance property of $\verb|hash|$, uniqueness of $\mu$ and $\tau$, we can state that $\mathcal{L}({\delta}) \in \Delta^{\mathcal{L}}_U$ which corresponds to $\mathcal{M}([u,\delta]|R) =1$

\noindent Case 1.2 (Exclusion implies Non Membership): For this case, it is sufficient to prove that $(u\notin U)\implies \mathcal{M}([u,\delta]|R) =0$. If $u\notin U$, then $\mathcal{L}(\delta) \notin \Delta^{\mathcal{L}}_U$ which would mean that $\mathcal{M}([u,\delta]|R) =0$ according to Definition \ref{def_mf}.

\noindent Case 2.1 (Membership implies Inclusion): For this case, it is sufficient to show that $\mathcal{M}([u,\delta]|R) =1 \implies (u\in U)$. We know using Definition \ref{def_mf} that $\mathcal{M}([u,\delta]|R) =1$, can only occur when $\exists \mathcal{L}(\delta) \in \Delta^{\mathcal{L}}_U$. As a result, we can find unique user and transform salt vectors $\mu\in \mathbb{R}^{s_u}$ and $\tau\in\mathbb{R}^{s_t}$ to obtain an augmented data tuple $(\delta,\mu,\tau)$ such that $H = \verb|hash|(\mathcal{L}^s(\delta,\mu,\tau))$. Using Properties \ref{prop:leaf_identity} and \ref{prop:leaf_unoccupied}, we can state that $\exists \mathcal{N} = \verb|Decimal|(H)$ and $\varphi^0_\mathcal{N} \neq \mathcal{L}^s(\varnothing)$ thereby demonstrating inclusion $u\in U$.

\noindent Case 2.2 (Non Membership implies Exclusion): For this case, it is sufficient to demonstrate that $\mathcal{M}([u,\delta]|R) =0 \implies (u\notin U)$. We know that $\mathcal{M}([u,\delta]|R) =0$ occurs only when $\mathcal{L}({\delta}) \notin \Delta^{\mathcal{L}}_U$. Assuming sufficient entropy in the salt strings and given collision resistance, we can say that for all combinations of $\hat\mu\in \mathbb{R}^{s_u},\hat\tau\in\mathbb{R}^{s_t}$ the following relation holds where $H_u = \verb|hash|(\mathcal{L}^s(\delta_u,{\mu}_u,{\tau}_u))$
\begin{equation}
    Pr\Big[ \verb|hash|(\mathcal{L}^s(\delta,\hat{\mu},\hat{\tau})) = H_u\Big] \leq \negl(K) \quad \forall u\in U
\end{equation}
Therefore, it is not possible to find some $\hat{\mu},\hat{\tau}$ for which $\varphi^0_\mathcal{N}\neq \mathcal{L}^s(\varnothing)$ where $\mathcal{N} = \verb|Decimal|(\verb|hash|(\mathcal{L}^s(\delta,\hat{\mu},\hat{\tau})))$

\subsection{Proof of Proposition \ref{prs:pi_mem}}\label{subsec:proof_pi_mem}

\begin{proof}
We will present two cases for completing the proof.\\
\emph{Case 1}: ($\Phi_{\mathcal{L}^s}^u = 1 \text{ and }\ \Phi_{\mathcal{A}^{l}}^{u,k} = 1 \implies \mathcal{M}([u,\delta]|\mathcal{R}(\mathcal{A},\mathcal{L},U))$)\\
For this case, we know that leaf and aggregation proof artifacts are verifiable. 

\noindent Case 1.1: Let us assume that there exists a user $u\in U$ with datum $\delta$ such that 
\begin{gather}
    \Phi_{\mathcal{L}^s}^u = 1 \text{ and }\ \Phi_{\mathcal{A}^{l}}^{u,k} = 1 \label{eq:proofeq11}\\
    \mathcal{M}([u,\delta]|\mathcal{R}(\mathcal{A},\mathcal{L},U))=0
\end{gather}

If $\Phi_{\mathcal{L}^s}^u = 1$ and $\Phi_{\mathcal{A}^{l}}^{u,k} = 1$, then it also implies that there exists some user $u$ such that Definition \ref{defn3} applies. Using Proposition \ref{prs:csmt}, we already know that when $u\in U$ and $\mathcal{M}([u,\delta]|\mathcal{R}(\mathcal{A},\mathcal{L},U))=1$. As a result we reach a contradiction of our earlier stated assumption. 

\noindent Case 1.2: Let us assume that there exists a user $u\notin U$ with datum $\delta$ such that 
\begin{gather}
    \Phi_{\mathcal{L}^s}^u = 1 \text{ and }\ \Phi_{\mathcal{A}^{l}}^{u,k} = 1 \label{eq:proofeq12}\\
    \mathcal{M}([u,\delta]|\mathcal{R}(\mathcal{A},\mathcal{L},U))=1
\end{gather}
Using a similar logic as above, we argue that using Proposition \ref{prs:csmt}, $\mathcal{M}([u,\delta]|\mathcal{R}(\mathcal{A},\mathcal{L},U))=0$ when $u\notin U$ leading us to a contradiction of $\mathcal{M}([u,\delta]|\mathcal{R}(\mathcal{A},\mathcal{L},U))=1$ 

Using both Case 1.1 and Case 1.2, we can clearly state that a user who satisfies $\Phi_{\mathcal{L}^s}^u = 1 \text{ and }\ \Phi_{\mathcal{A}^{l}}^{u,k} = 1$ also necessarily leads to the correct output of the membership function $\mathcal{M}([u,\delta]|\mathcal{R}(\mathcal{A},\mathcal{L},U))$.

\noindent
\emph{Case 2} ($\mathcal{M}([u,\delta]|\mathcal{R}(\mathcal{A},\mathcal{L},U)) \implies \Phi_{\mathcal{L}^s}^u = 1 \text{ and }\ \Phi_{\mathcal{A}^{l}}^{u,k} = 1$)

\noindent To prove sufficiency, let us consider two sub cases.

\noindent Case 2.1: Assume we have a user $u\in U$ with datum $\delta$ such that $\mathcal{M}([u,\delta]|\mathcal{R}(\mathcal{A},\mathcal{L},U))=1$. Let us assume either $\Phi_{\mathcal{L}^s}^u = 0 \text{ or }\ \Phi_{\mathcal{A}^{l}}^{u,k} = 0$ for some $k$.
We know that if $\Phi_{\mathcal{L}^s}^u = 0$, then
\begin{gather}
    \mathcal{L}^{s}\big([\delta,\mu,\tau];\theta_{\mathcal{L}^s}\big) \;\neq\; [\,\hat{\mathcal{L}}(\delta,\mu;\theta_{\hat{\mathcal{L}}}), \tau\,], \quad \tau \in \mathbb{R}^{s_t} 
\end{gather}
which is a contradiction of Property \ref{prop:leaf_uniquenes} if $u \in U$. Therefore $\Phi_{\mathcal{L}^s}^u = 1$. 

Similarly, we know that if $\Phi_{\mathcal{A}^{l}}^{u,k} = 0$ for some $k$, $u\in U$ then this is a violation of Property \ref{prop:mer_con} which implies an inconsistent Merkle path. Further using Property \ref{prop:leaf_path} we can also assert that the given user $u\in U$ corresponds to a consistent Merkle path which presents a contradiction. We can ultimately establish the fact that if $u \in U$ then it necessarily implies that $\Phi_{\mathcal{L}^s}^u = 1 \text{ and }\ \Phi_{\mathcal{A}^{l}}^{u,k} = 1$.

Case 2.2 Assume we have a user $u\notin U$, with $\mathcal{M}([u,\delta]|\mathcal{R}(\mathcal{A},\mathcal{L},U))=0$. As before, we suppose that either $\Phi_{\mathcal{L}^s}^u = 0 \text{ or }\ \Phi_{\mathcal{A}^{l}}^{u,k} = 0$ for some $k$. Since $u\notin U$, this implies that $ \mathcal{L}^{s}\big([\delta,\mu,\tau];\theta_{\mathcal{L}^s}\big) = \mathcal{L}^s(\varnothing)$. However, we also know that if $\Phi_{\mathcal{L}^s}^u = 0$, then
\begin{gather}
    \mathcal{L}^{s}\big([\delta,\mu,\tau];\theta_{\mathcal{L}^s}\big) \;\neq\; [\,\hat{\mathcal{L}}(\delta,\mu;\theta_{\hat{\mathcal{L}}}), \tau\,], \quad \tau \in \mathbb{R}^{s_t} 
\end{gather}

This implies that we encounter a contradiction
\begin{gather}
    \mathcal{L}^{s}\big([\delta,\mu,\tau];\theta_{\mathcal{L}^s}\big) \;\neq\ \mathcal{L}^s(\varnothing) \implies u\in U
\end{gather}

Similarly, we know that if $\Phi_{\mathcal{A}^{l}}^{u,k} = 0$ for some $k$ $u\notin U$ then this is a violation of Property \ref{prop:mer_con} which implies an inconsistent Merkle path. As before, using Property \ref{prop:leaf_path} we can also assert that even when user $u\notin U$ corresponds to a consistent Merkle path. This implies a contradiction establishing the fact that if $u \notin U$ it necessarily implies that $\Phi_{\mathcal{L}^s}^u = 1 \text{ and }\ \Phi_{\mathcal{A}^{l}}^{u,k} = 1$.
\end{proof}

\subsection{Proof of Proposition \ref{prs:ks_lrt_mrp}}\label{subsec:proof_ks_lrt_mrp}
\begin{proof}
We already know under Proposition \ref{prs:pi_mem} that verifying the LTR and MRP proofs is a necessary and sufficient condition for realizing the membership of user $u$ in a reduction operation given $R(\mathcal{A},\mathcal{L},U^{inc})$. 

Let $\Omega^u_\delta$ denote the raw user data record $\delta^u$ of user $u$ as contained in witness $\Omega^u_{LTR} \in \Omega^u$. 
We assume a compromised CRO possessing an extractor $\mathcal{E}^{\mathcal{C}}$ that produces a valid proof $\hat{\Pi}^u$ with an incorrect witness $\hat{\Omega}^u = \mathcal{E}^{\mathcal{C}}(\Omega^{u})$ such that
\begin{gather}\label{eq:verinc}
    \hat{\Phi}^{(\delta,\mu)}= \mathsf{VerInc}(H^{(\delta,\mu)},H^{leaf},\hat{\Pi}^{u},H^{root},H^\eta,vk_{\mathcal{L}^s},vk_{\mathcal{A}^l}) = 1
\end{gather}

There are two possible outcomes to generate an incorrect witness $\hat{\Omega}^u$


\noindent Case 1 (Incorrect salted raw data): 
In this case, we consider a scenario wherein a compromised CRO tries to generate a valid zk-SNARK $\hat\Pi^u$ that maintains the consistency between the first input in $\hat\Pi^u$ and the salted raw user data hash $H^{(\delta,\mu)}$. 

More technically, the CRO tries to generate a valid $\hat\Pi^u_{\mathcal{L}^s}$ using an incorrect witness $\hat\Omega^u=\{\hat\Omega^u_{LTR},\Omega^u_{MRP}\}$.
To prove using contradiction, we assume there exists a perturbed salted raw data tuple $\hat\Omega^u_{LTR} = \{\hat\delta_u,\hat\mu_u,\tau_u\}$ that leads to $\hat{\Phi}^{(\delta,\mu)}=1$. We denote the LTR constraint set $\mathcal{Z}^u_{LTR}$ for user u as defined by Equation system \eqref{eq:verinc4}.
\begin{gather}
    \hat{\Pi}_{\mathcal{L}^s}[\mathsf{Input1}] = H^{(\delta,\mu)},\ \hat{\Pi}_{\mathcal{L}^s}[\mathsf{Input2}] = H^{\tau}, \ \hat{\Pi}_{\mathcal{L}^s}[\mathsf{Output}] = H^{leaf}_u \label{eq:verinc4}
\end{gather}
We know from Algorithm \ref{alg:csmt_ltr_proof_verify} that $(H^{(\delta,\mu)},H^\tau,H^{leaf}_u) \in \mathcal{Z}^{u}_{LTR}$
must hold for \texttt{LTRVerify} to return a success as given in Equation \eqref{eq:verinc0}.  
\begin{gather}
\mathsf{LTRVerify}(vk_{\mathcal{L}^s}, \Pi_{\mathcal{L}^s}^u, H^{(\delta_u,\mu_u)}, H^{leaf}_u) = 1 \label{eq:verinc0}
\end{gather}


By the knowledge-soundness of Halo2~\cite{halo2_zcash_2023}, a valid $\hat\Pi_{\mathcal{L}^s}$ implies the existence of an extracted witness satisfying the circuit relation $\mathcal{R}_{\mathcal{L}^s}$, except with probability $\mathrm{negl}_1(\lambda)$. Since $H$ is collision-resistant, a witness satisfying the in-circuit hash constraints $H(\hat\delta_u,\hat\mu_u)=H^{(\delta,\mu)}$ and $H(\tau_u)=H^\tau$ implies $(\hat\delta_u,\hat\mu_u,\tau_u)=(\delta_u,\mu_u,\tau_u)$, except with probability $\mathrm{negl}_2(\lambda)$. Hence
\[
\Pr\big[\mathsf{Verify}(vk_{\mathcal{L}^s},\hat\Pi_{\mathcal{L}^s})=1 \wedge \hat\Omega^u_{LTR}\ne\Omega^u_{LTR}\big] \le \mathrm{negl}_1(\lambda)+\mathrm{negl}_2(\lambda) = \mathrm{negl}(\lambda)
\]

This leads us to a contradiction thereby establishing the fact that a compromised CRO cannot generate a valid zk-SNARK that proves the inclusion of the incorrect data for any user. 

\noindent Case 2 (Incorrect MRP proof): We consider a scenario where a compromised CRO generates a valid proof $\hat\Pi^u_{\mathcal{A}^l}$ using an incorrect MRP witness $\hat\Omega^u_{MRP}$. Using contradiction, we assume that there exists a valid zk-SNARK $\hat\Pi^{u,k}_{\mathcal{A}^l}\in$ that is generated using an incorrect $\hat\Omega^{u,k}_{MRP}$. 

\noindent Case 2.1 (Incorrect Selector Bit Path): In this case, to prove using contradiction, we assume that the CRO supplies $\hat\Pi^u_{\mathcal{A}^l}$ that corresponds to an
incorrect salted leaf transform $\hat{H}^{leaf}_u$ instead of the genuine $H^{leaf}_u$ for user $u$. However, we know that $\hat H^{leaf}_u$ cannot lead to a successful evaluation of \texttt{LTRVerify} at the verifier level due to the knowledge-soundness argument of Case 1. As a consequence, by contradiction we can say that the CRO cannot use an incorrect selector bit path sequence $\hat B$ instead of the correct selector bit path $B$ without violating the knowledge-soundness property.

\noindent Case 2.2 (Incorrect MRP hop inputs): 
We assume that the CRO uses an incorrect hop witness for some level $1 \leq k \leq K$ while using the correct selector bit path index $B$. For a particular level $k$, we denote MRP constraint set as $\mathcal{Z}^{k,u}_{MRP}$ which is defined by Equations \eqref{eq:verinc10}-\eqref{eq:verinc11}.
\begin{gather}
    \mathsf{hash}(B[k]) = H^{k,b} \text{ and } \mathsf{Bin}(H^{leaf}_u) = B\label{eq:verinc10}\\
    H^{k,L} = 
    \begin{cases}
       \hat\Pi^{k-1}_{\mathcal{A}^l}[\mathsf{Parent}], \ \text{ if } B[k] = 0\\
       \hat\Pi^k_{\mathcal{A}^l}[\mathsf{LeftInput}], \ \text{ otherwise }
    \end{cases}    
    \label{eq:verinc7}\\ 
    H^{k,R} = 
    \begin{cases}
       \hat\Pi_{\mathcal{A}^l}[\mathsf{RightInput}], \ \text{ if } B[k] = 0\\
       \hat\Pi^{k-1}_{\mathcal{A}^l}[\mathsf{Parent}] \ \text{ otherwise }
    \end{cases}    
    \label{eq:verinc8}\\ 
    \hat\Pi_{\mathcal{A}^l}[\mathsf{Bit}] = H^{k,b}\label{eq:verinc9}\\
    \hat\Pi_{\mathcal{A}^l}[\mathsf{Nonce}] = H^{\eta}\label{eq:verinc11}
\end{gather}
We know from Algorithm \ref{alg:csmt_mrp_proof_verify} that for a particular level $k$ Equation \eqref{eq:verinc5} must be fulfilled to successfully validate MRP hop proof artifact.
\begin{gather}
    \mathsf{MRPHopVerify}(vk_{\mathcal{A}^{l}},\hat\Pi_{\mathcal{A}^{l}}^k,H^{k,L},H^{k,R},H^{k,b},H^{\eta}) = 1 \label{eq:verinc5}
\end{gather}
\noindent By applying the knowledge-soundness and collision-resistance argument we know that the following condition holds.
\[
\begin{aligned}
\Pr&\big[
\verb|Verify|(vk_{\mathcal{A}^l}, \hat\Pi^u_{\mathcal{A}^l}) = 1 \wedge\ (H^{k,L},H^{k,R},H^{k,b},H^{\eta}) \in \mathcal{Z}^{k,u}_{MRP} 
\big]
\ \le\ \text{negl}(\lambda)
\end{aligned}
\]
This is a direct contradiction of our assumption which completes the proof.
\end{proof}



\subsection{Proof of Proposition \ref{prs:exc_check}}\label{subsec:proof_exc_check}
\begin{proof}
Consider a subset $\mathcal{D}_q \subseteq\Delta^{\mathcal{L}^s}_U$ of $q$ ordered leaf nodes, where $\mathcal{D}_q = \{\varphi^{0,s}_{1},\varphi^{0,s}_{2} \ldots \varphi^{0,s}_{q}\}$ and $q\leq 2^K$. Further consider the lower and upper bound decimal representation limits of the hash of the individual leaves of set $\mathcal{D}_q$ defined in Equations \eqref{eq:exc_check0} and \eqref{eq:exc_check0b}.
\begin{gather}
    L^q = min\Big( \{\texttt{Decimal}[\texttt{hash}(\varphi^{0,s}_{1})], \ldots \texttt{Decimal}[\texttt{hash}(\varphi^{0,s}_{q})] \} \Big) \label{eq:exc_check0}\\
    U^q = max\Big( \{\texttt{Decimal}[\texttt{hash}(\varphi^{0,s}_{1})], \ldots \texttt{Decimal}[\texttt{hash}(\varphi^{0,s}_{q})] \} \Big) \label{eq:exc_check0b}
\end{gather}

Let $H^{root}_{\mathcal{D}_s}$ be the subtree root corresponding to the ordered transformed leaf set $\mathcal{D}_q$. Now consider a $\hat{\phi}$, such that we obtain another ordered set $\hat{\mathcal{D}}_{q} = \mathcal{D}_s \underset{ord}{\cup} \{\hat{\phi}\}$ with $H^{root}_{\hat{\mathcal{D}}_s}$ as the subtree root of $\hat{\mathcal{D}}_q$. 

To prove by contradiction, let us assume that $H^{root}_{\hat{\mathcal{D}}_q} = H^{root}_{\mathcal{D}_q}$. We know that we can define $\hat{H} = \texttt{hash}(\hat{\phi})$ for which 
\begin{gather}
L^q\leq \texttt{Decimal}[\hat{H}] \leq U^q, \text{ and } H^{root}_{\hat{\mathcal{D}}_q} = H^{root}_{\mathcal{D}_q}
\end{gather} 

However, using Property \ref{prop:leaf_uniquenes}, we know that all leaf transformations in $\hat{\mathcal{D}}_{q}$ must be unique. Therefore, using the computational aggregation primitive $\mathcal{A}^s(.;\theta_{\mathcal{A}^s})$ as defined in Equation \eqref{eq:a_rec} we obtain Equation \eqref{eq:exc_check_1}.
\begin{equation}\label{eq:exc_check_1}
    \mathcal{A}^s(\mathcal{D}_q;\theta_{\mathcal{A}^s}) \neq \mathcal{A}^s(\mathcal{\hat{\mathcal{D}}_q;\theta_{\mathcal{A}^s}})
\end{equation}
As a consequence of Equation \eqref{eq:exc_check_1} along with the properties of the hash function defined in Definition \ref{defn3}, we can say that $H^{root}_{\mathcal{D}_q} \neq H^{root}_{\hat{\mathcal{D}}_q}$. As a result, we violate Definition \ref{def:csmt} which governs the construction of the CSMT, thereby completing the proof.


\end{proof}

\section{Supplementary Text}
\subsection{Case Studies}
\subsubsection{Huntington Disease} \label{apx:hd}
Huntington's disease (HD) results from CAG repeat expansion in the HTT gene \cite{Macdonald1993_mp}, with repeat number correlating with onset and severity \cite{Lee2012_hn}. We tested whether CAG repeat distributions differ between 12 healthy individuals and 12 HD-symptomatic individuals \cite{Moily2014_wn, Jiang2014_we, Gardiner2017_vg, Vater2025_fq} using a two-sample Kolmogorov-Smirnov test, with our framework providing membership and non-membership proofs without disclosing participant data.

\subsubsection{HIV-1} \label{apx:hiv}
HIV-1 resistance to the Env-targeting therapeutic Temsavir (TMR) \cite{Markham2020, Wang2018_xk} is driven largely by mutations at Env positions 375, 426, 434, and 475 \cite{Gartland2021_mz, Zuze2023_wh}. We used training (n=479) and test (n=85) sets with 19 dummy-encoded sequence features per sample and PhenoSense GT-measured resistance, demonstrating sequence-based resistance prediction under our privacy-preserving verification framework.

\section{Supplementary Algorithms}    \label{apx:algorithms}
\subsection{Algorithm for detecting data exclusivity}\label{subsec:data_exc_exp}
Algorithm~\ref{alg:csmt_data_exclusivity} defines \verb|VerifyDataExclusivity|, which takes the PHR Merkle root reference $\mathcal{P}$, the set of non-default salted leaf nodes $\mathcal{T}$, and, for each user $u$ in the study, the hash tuple $(H^{(\delta_u,\mu_u)}, H^{\tau_u})$, collectively denoted $\mathcal{H}$. By Assumption~\ref{as:phr}, every hash tuple in $\mathcal{H}$ must be substantiated by a Merkle proof from $\mathcal{P}$; both sets are pre-committed by the CRO ahead of Algorithm~\ref{alg:buildsmt}. The algorithm therefore first checks PHR Merkle consistency, $(H^{(\delta_u,\mu_u)},H^{\tau_u}) \in \mathcal{P} \cap \mathcal{H}$, for each LTR proof, building a set of non-default leaf hashes as the basis for the MRP exclusivity check. It then performs a depth-first traversal of the CSMT to collect the nodes reachable from non-default leaves. Per Definition~\ref{def:data_exc} and Proposition~\ref{prs:exc_check}, data exclusivity holds when (i) the leaf set found by MRP traversal matches the leaf set found from the LTR proofs, and (ii) every non-leaf MRP node has at least one leaf descendant.

\begin{algorithm}[htbp]
 \caption{Data Exclusivity Check}
 \label{alg:csmt_data_exclusivity}
 \begin{algorithmic}[1]
    \Function{VerifyDataExclusivity}{$\mathcal{H}^{inc},\mathcal{P},\mathcal{T},\Pi^{U^{inc}}_{\mathsf{CSMT}}$}
        \State $\mathsf{NodesLTR} \leftarrow \varnothing$
        \For{$\Pi^u_{\mathcal{L}^{s}} \in \Pi^{U^{inc}}_{\mathsf{CSMT}}$}
            \State retrieve $H^{leaf}_u, H^{(\delta_u,\mu_u)}, H^{\tau_u}$ from $\Pi^u_{\mathcal{L}^{s}}$
            \If{$(H^{(\delta_u,\mu_u)},H^{\tau_u}) \in \mathcal{P} \cap \mathcal{H}$}
                \State $\mathsf{NodesLTR} \leftarrow \mathsf{NodesLTR} \cup \{H^{leaf}_u\}$
            \EndIf
        \EndFor
        \State $\mathsf{NodesMRP} \leftarrow \varnothing$
        \For{$H^{leaf} \in \{\mathcal{T}:\mathsf{SaltedLeafSet}\}$, $k = 1$ \textbf{to} $K$}
            \State retrieve $v_{sib}, v$ from $\Pi_{\mathcal{A}^{l}}^k \leftarrow \Pi^{u}_{\mathcal{A}^l}[k]$
            \State $\mathsf{NodesMRP}[v_{sib}].\mathsf{Leaves} \leftarrow \mathsf{NodesMRP}[v_{sib}].\mathsf{Leaves} \cup \varnothing$
            \State $\mathsf{NodesMRP}[v].\mathsf{Leaves} \leftarrow \mathsf{NodesMRP}[v].\mathsf{Leaves} \cup \{H^{leaf}\}$
        \EndFor
        \If{$\bigcup_{v}\mathsf{NodesMRP}[v].\mathsf{Leaves} \neq \mathsf{NodesLTR}$}
        \State \Return spurious leaf existence detected
        \EndIf
        \For{$v \in \mathsf{NodesMRP}$}
        \If{$\mathsf{NodesMRP}[v].\mathsf{Leaves} = \varnothing$ and $v\notin \mathcal{T}$}
        \State \Return spurious leaf existence detected
        \EndIf
        \EndFor
    \EndFunction
 \end{algorithmic}
\end{algorithm}

\subsection{Verifying zk-SNARK Artifacts} \label{apx:snarkver}
Algorithm \ref{alg:csmt_ltr_proof_verify} defines function \texttt{LTRVerify} as a means to verify LTR proofs based on verification key $vk_{\mathcal{L}^s}$, LTR zk-SNARK $\Pi_{\mathcal{L}^s}$, hash of raw salted user data, transform salt and salted leaf transform $H^{raw}$,$H^{\tau}$,$H^{leaf}$ respectively. The \texttt{LTRVerify} function primarily validates the given zk-SNARK using the verification key using the \texttt{Verify} function. Additionally, it ensures hash consistency between supplied hashes $H^{raw}$,$H^{\tau}$,$H^{leaf}$ and the input and output hashes contained in the provided zk-SNARK. The final verification output of the function depends on successfully passing the criteria for zk-SNARK validation as well as the hash consistency check.

\begin{algorithm}[htbp]
 \caption{Verifying LTR Proof}
 \label{alg:csmt_ltr_proof_verify}
 \begin{algorithmic}[1]
    \Function{LTRVerify}{$vk_{\mathcal{L}^s}, \Pi_{\mathcal{L}^s}, H^{raw},H^{\tau}, H^{leaf}$}
        \State $\Phi_{zk}\leftarrow \mathsf{Verify}(vk_{\mathcal{L}^s}, \Pi_{\mathcal{L}^s})$
        \For{$(\mathsf{Field}, H) \in \{(\mathsf{Input1},H^{raw}), (\mathsf{Input2},H^{\tau}), (\mathsf{Output},H^{leaf})\}$}
        \If{$\Pi_{\mathcal{L}^s}[\mathsf{Field}] = H$}
        \State $\Phi_{\mathsf{Field}} \leftarrow 1$
        \Else
        \State $\Phi_{\mathsf{Field}} \leftarrow 0$
        \EndIf
        \EndFor
        \State \Return $\Phi_{zk} \cap \Phi_{us} \cap \Phi_{ts} \cap \Phi_{output}$
    \EndFunction
\end{algorithmic}
\end{algorithm}
In Algorithm \ref{alg:csmt_mrp_proof_verify} we present function \texttt{MRPHopVerify} which handles the verification of per-hop MRP proofs. The \texttt{MRPHopVerify} function consumes the verification key for aggregator circuit, zk-SNARK proof artifact as well as the hashes of the left and right sibling, selector bit path index array and the nonce. These input variables are denoted by $vk_{\mathcal{A}^{l}},\Pi_{\mathcal{A}^{l}},H^{L},H^{R},H^b,H^\eta$ respectively. The function validates the provided zk-SNARK using the verification key and checks for hash consistency of left and right siblings, selector bits and the nonce. Passing each criteria results in an overall success of the \texttt{MRPHopVerify} function.

\begin{algorithm}[htbp]
 \caption{Verifying MRP Hop Proofs}
 \label{alg:csmt_mrp_proof_verify}
 \begin{algorithmic}[1]
    \Function{MRPHopVerify}{$vk_{\mathcal{A}^{l}},\Pi_{\mathcal{A}^{l}},H^{L},H^{R},H^b,H^\eta$}
        \State $\Phi_{zk}\leftarrow \mathsf{Verify}(vk_{\mathcal{A}^s}, \Pi_{\mathcal{A}^s})$
        \For{$(\mathsf{Field}, H) \in \{(\mathsf{LeftInput},H^L), (\mathsf{RightInput},H^R), (\mathsf{Bit},H^b), (\mathsf{Nonce},H^\eta)\}$}
        \If{$\Pi_{\mathcal{A}^s}[\mathsf{Field}] = H$}
        \State $\Phi_{\mathsf{Field}} \leftarrow 1$
        \Else
        \State $\Phi_{\mathsf{Field}} \leftarrow 0$
        \EndIf
        \EndFor
        \State \Return $\Phi_{zk} \cap \Phi_{L} \cap \Phi_{R} \cap \Phi_{b} \cap \Phi_{\eta}$
    \EndFunction
\end{algorithmic}
\end{algorithm}


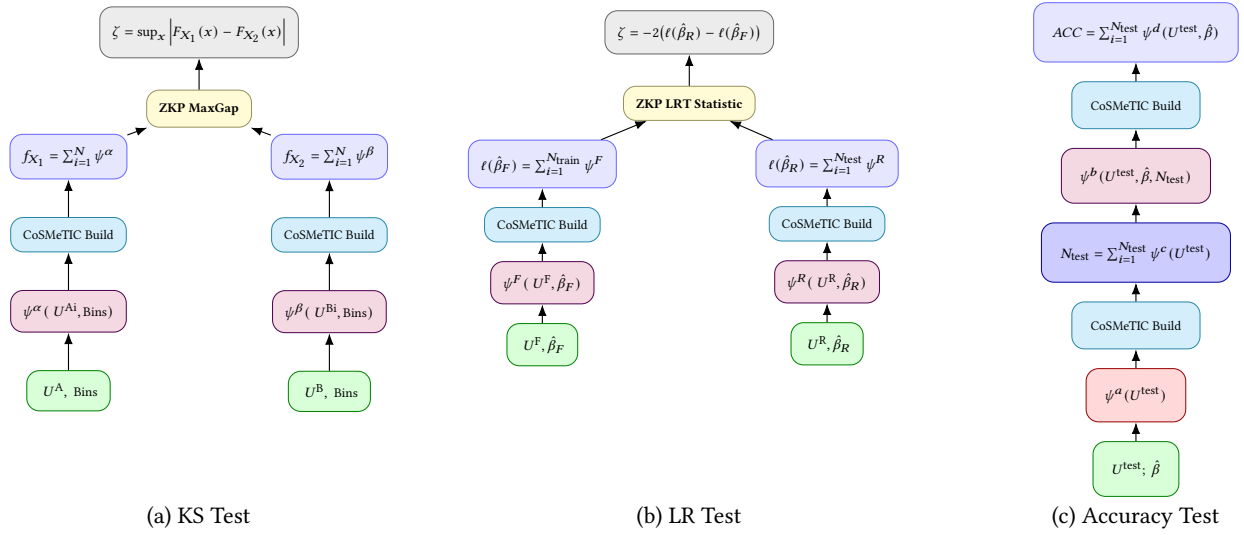
\begin{figure*}[t]
\centering
\setlength{\tabcolsep}{6pt}

\begin{tabular}{ccc}
\begin{minipage}[t]{0.31\textwidth}
\vspace{0pt}
\centering
\resizebox{\linewidth}{!}{%
  \begin{tikzpicture}[font=\scriptsize, >=Latex,
    baseline=(current bounding box.north),
    arrow/.style={-Latex, line width=0.5pt},
    box/.style={draw, rounded corners=2mm, align=center, inner sep=2mm},
    top/.style={box, fill=gray!15, draw=black!60},
    agg/.style={box, fill=yellow!20, draw=yellow!50!black},
    proc/.style={box, fill=blue!10, draw=blue!60},
    smt/.style={box, fill=cyan!15, draw=cyan!60!black},
    leaf/.style={box, fill=purple!15, draw=violet!60!black},
    input/.style={box, fill=green!15, draw=green!50!black}
  ]
  \path[use as bounding box] (-3.3,0.4) rectangle (3.3,-7.5);
  \node[top] (zeta)
  {$\zeta = \sup_x \left|F_{X_1}(x)-F_{X_2}(x)\right|$};

  \node[agg, below=5mm of zeta] (maxgap)
  {\textbf{ZKP MaxGap}};

  \node[proc, below left=1mm and 3mm of maxgap] (fx1)
  {$f_{X_1}=\sum_{i=1}^N \psi^{\alpha}$};

  \node[proc, below right=1mm and 3mm of maxgap] (fx2)
  {$f_{X_2}=\sum_{i=1}^N \psi^{\beta}$};

  \node[smt, below=6mm of fx1] (smt1)
  {CoSMeTIC Build};

  \node[smt, below=6mm of fx2] (smt2)
  {CoSMeTIC Build};

  \node[leaf, below=6mm of smt1] (psi1)
  {$\psi^{\alpha}(\;U^{\mathrm{Ai}},\text{Bins})$};

  \node[leaf, below=6mm of smt2] (psi2)
  {$\psi^{\beta}(\;U^{\mathrm{Bi}},\text{Bins})$};

  \node[input, below=6mm of psi1] (x1)
  {$\;U^{\mathrm{A}},\;\text{Bins}$};

  \node[input, below=6mm of psi2] (x2)
  {$\;U^{\mathrm{B}},\;\text{Bins}$};

  \draw[arrow] (maxgap) -- (zeta);
  \draw[arrow] (fx1) -- (maxgap);
  \draw[arrow] (fx2) -- (maxgap);
  \draw[arrow] (smt1) -- (fx1);
  \draw[arrow] (smt2) -- (fx2);
  \draw[arrow] (psi1) -- (smt1);
  \draw[arrow] (psi2) -- (smt2);
  \draw[arrow] (x1) -- (psi1);
  \draw[arrow] (x2) -- (psi2);

  \end{tikzpicture}%
}
\vspace{1mm}
{(a) KS Test}
\end{minipage}
\hspace{5mm}
&
\begin{minipage}[t]{0.31\textwidth}
\vspace{0pt}
\centering
\resizebox{\linewidth}{!}{%
  \begin{tikzpicture}[font=\scriptsize, >=Latex,
    baseline=(current bounding box.north),
    arrow/.style={-Latex, line width=0.5pt},
    box/.style={draw, rounded corners=2mm, align=center, inner sep=2mm},
    top/.style={box, fill=gray!15, draw=black!60},
    agg/.style={box, fill=yellow!20, draw=yellow!50!black},
    proc/.style={box, fill=blue!10, draw=blue!60},
    smt/.style={box, fill=cyan!15, draw=cyan!60!black},
    leaf/.style={box, fill=purple!15, draw=violet!60!black},
    input/.style={box, fill=green!15, draw=green!50!black}
  ]
  \path[use as bounding box] (-3.3,0.4) rectangle (3.3,-7.5);

  \node[top] (lrt)
  {$\zeta  = -2\big(\ell(\hat{\beta}_R) - \ell(\hat{\beta}_F)\big)$};

  \node[agg, below=5mm of lrt] (lrtcirc)
  {\textbf{ZKP LRT Statistic}};

  \node[proc, below left=3mm and 1mm of lrtcirc] (lfull)
  {$\ell(\hat{\beta}_F)=\sum_{i=1}^{N_{\text{train}}}\psi^{F}$};

  \node[proc, below right=3mm and 0mm of lrtcirc] (lred)
  {$\ell(\hat{\beta}_R)=\sum_{i=1}^{N_{\text{test}}}\psi^{R}$};

  \node[smt, below=3mm of lfull] (smtF) {CoSMeTIC Build};
  \node[smt, below=3mm of lred]  (smtR) {CoSMeTIC Build};

  \node[leaf, below=3mm of smtF] (psiF)
  {$\psi^{F}(\;U^{\mathrm{F}},\hat{\beta}_F)$};

  \node[leaf, below=3mm of smtR] (psiR)
  {$\psi^{R}(\;U^{\mathrm{R}},\hat{\beta}_R)$};

  \node[input, below=3mm of psiF] (inputF)
  {$\;U^{\mathrm{F}},\hat{\beta}_F$};

  \node[input, below=3mm of psiR] (inputR)
  {$\;U^{\mathrm{R}},\hat{\beta}_R$};

  \draw[arrow] (lrtcirc) -- (lrt);
  \draw[arrow] (lfull) -- (lrtcirc);
  \draw[arrow] (lred)  -- (lrtcirc);
  \draw[arrow] (smtF)  -- (lfull);
  \draw[arrow] (smtR)  -- (lred);
  \draw[arrow] (psiF)  -- (smtF);
  \draw[arrow] (psiR)  -- (smtR);
  \draw[arrow] (inputF) -- (psiF);
  \draw[arrow] (inputR) -- (psiR);

  \end{tikzpicture}%
}
\vspace{1mm}
{(b) LR Test}
\end{minipage}
&
\begin{minipage}[t]{0.31\textwidth}
\vspace{0pt}
\centering
\resizebox{\linewidth}{!}{%
  \begin{tikzpicture}[font=\scriptsize, >=Latex,
    baseline=(current bounding box.north),
    arrow/.style={-Latex, line width=0.5pt},
    box/.style={draw, rounded corners=2mm, align=center, inner sep=3mm},
    top/.style={box, fill=blue!10, draw=blue!60},
    smt/.style={box, fill=cyan!15, draw=cyan!60!black},
    midpurple/.style={box, fill=purple!15, draw=violet!60!black},
    midblue/.style={box, fill=blue!20, draw=blue!60!black},
    midred/.style={box, fill=red!15, draw=red!60!black},
    input/.style={box, fill=green!15, draw=green!50!black}
  ]
  \path[use as bounding box] (-3.3,0.4) rectangle (3.3,-7.5);
  \node[top] (accF)
  {$ACC=\sum_{i=1}^{N_{\mathrm{test}}}\psi^{d}(U^{\mathrm{test}},\hat{\beta})$};

  \node[smt, below=3mm of accF] (smtTop) {CoSMeTIC Build};

  \node[midpurple, below=3mm of smtTop] (psib)
  {$\psi^{b}(U^{\mathrm{test}},\hat{\beta},N_{\mathrm{test}})$};

  \node[midblue, below=3mm of psib] (Ntest)
  {$N_{\mathrm{test}}=\sum_{i=1}^{N_{\mathrm{test}}}\psi^{c}(U^{\mathrm{test}})$};

  \node[smt, below=3mm of Ntest] (smtBottom) {CoSMeTIC Build};

  \node[midred, below=3mm of smtBottom] (psia)
  {$\psi^{a}(U^{\mathrm{test}})$};

  \node[input, below=3mm of psia] (inputs)
  {$U^{\mathrm{test}};\ \hat{\beta}$};

  \draw[arrow] (smtTop) -- (accF);
  \draw[arrow] (psib) -- (smtTop);
  \draw[arrow] (Ntest) -- (psib);
  \draw[arrow] (smtBottom) -- (Ntest);
  \draw[arrow] (psia) -- (smtBottom);
  \draw[arrow] (inputs) -- (psia);

  \end{tikzpicture}%
}
\vspace{1mm}
{(c) Accuracy Test}
\end{minipage}
\end{tabular}

\caption{CoSMeTIC workflows for KS, LRT, and accuracy.}
\label{fig:three_workflows_row}
\end{figure*}

\subsection{Kolmogorov--Smirnov (KS) Test}\label{subsec:ks-test}
We first consider the two-sample Kolmogorov--Smirnov (KS) test, which compares the empirical distributions of Group~A (healthy) and Group~B (HD) (Algorithm~\ref{alg:ks_algorithm}). The leaf-level transformation $\mathcal{L}^{s}_{BC}$ maps each data point to a one-hot vector indicating its bin assignment under a pre-specified bin vector $\theta_{\mathcal{L}^{s}_{BC}}$, and the aggregation function $\mathcal{A}^{l}_{sum}$ sums these vectors across observations within each group, yielding bin-count vectors $\Psi^A$, $\Psi^B$. The \textsc{MaxAbsoluteGap} circuit then converts $\Psi^A$, $\Psi^B$ into cumulative distributions $F_A$, $F_B$ and returns $\zeta = \|F_A - F_B\|_{\infty}$, the standard KS statistic measuring the largest gap between the two empirical distributions. The KS verifier confirms correctness by checking the CSMT-based inclusion/exclusion proof for $u$, the max-gap proof $\Pi_{MAG}$ against $vk_{MAG}$, and, if both pass, that the sample-group roots $H^{root,A}$, $H^{root,B}$ match their expected hashes; it returns failure tagged with whichever check did not hold.


\begin{algorithm}[htbp]
 \caption{Two-sample KS Test Statistic Computation}
 \label{alg:ks_algorithm}
 \begin{algorithmic}[1]
    \Function{KS2Sample}{$U,U^{A},U^B$}
        \Statex\LineComment{setup bin count leaf transformation circuit}
        \State $pk_{\mathcal{L}^{s}_{BC}},vk_{\mathcal{L}^{s}_{BC}}\leftarrow\mathsf{Setup}(1^\lambda,\mathcal{L}^{s}_{BC},\theta_{\mathcal{L}^{s}_{BC}})$

        \Statex\LineComment{setup bin count vector sum aggregation circuit}
        \State $pk_{\mathcal{A}^{l}_{sum}},vk_{\mathcal{A}^{l}_{sum}}\leftarrow\mathsf{Setup}(1^\lambda,\mathcal{A}^{l}_{sum},\theta_{\mathcal{A}^{l}_{sum}})$

        \Statex \LineComment{salted bin count \cosmetic build on Group A}
        \State $\Psi^A,H^{root,A}\leftarrow$\Call{CosmeticCROBuild}{$U,U^A,vk_{\mathcal{L}^{s}_{BC}},vk_{\mathcal{A}^{l}_{sum}}$}

        \Statex \LineComment{salted bin count \cosmetic build on Group B}
        \State $\Psi^B,H^{root,B}\leftarrow$\Call{CosmeticCROBuild}{$U,U^B,vk_{\mathcal{L}^{s}_{BC}},vk_{\mathcal{A}^{l}_{sum}}$}

        \Statex\LineComment{setup post-aggregation Max Absolute Gap circuit}
        \State $pk_{MAG},vk_{MAG}\leftarrow\mathsf{Setup}(1^\lambda,\mathcal{G}_{MAG})$
        \State distribute $vk_{MAG}$ to all users in $U^A$ and $U^B$
        \State $\zeta\leftarrow$\Call{MaxAbsoluteGap}{$\Psi^A,\Psi^B$}
        \State set $\Omega_{MAG} = \{\Psi^A,\Psi^B,\zeta\}$
        \State generate $\Pi_{MAG} = \mathsf{Prove}(pk_{MAG},\Omega_{MAG})$
        \State distribute $\Pi_{MAG},pk_{MAG}$ publicly
    \EndFunction
 \end{algorithmic}
\end{algorithm}




\subsection{Logistic Likelihood-Ratio Test}\label{subsec:lrt_test}
We next evaluate the Logistic Likelihood-Ratio Test (LRT), which assesses the statistical significance of a full logistic regression model relative to a reduced baseline (Algorithm~\ref{alg:lrt_algorithm}). The leaf-level transformation is the log-likelihood function $\mathcal{L}^{s}_{LL}$, parameterized by the regression coefficients $\theta_{\mathcal{L}^{s}_{LL}}$ and aggregated via the sum aggregator $\mathcal{A}^{l}_{sum}$; full and reduced models are evaluated independently within \cosmetic, yielding summed log-likelihoods $\Psi^{r}$, $\Psi^{f}$ that a zero-knowledge circuit combines into the LRT statistic $\zeta = -2(\Psi^{r} - \Psi^{f})$. The LRT verifier confirms correctness by checking the CSMT-based inclusion/exclusion proof for $u$, the log-likelihood proof $\Pi_{LL}$ against $vk_{LL}$, and, if both pass, that the full and reduced model roots $H^{root,f}$, $H^{root,r}$ match their expected hashes; it returns failure tagged with whichever check did not hold.


\begin{algorithm}[htbp]
 \caption{Likelihood Ratio Test Computation}
 \label{alg:lrt_algorithm}
 \begin{algorithmic}[1]
    \Function{LikelihoodRatioTest}{$U,U^{f},U^r$}
    
        \Statex \LineComment{setup log likelihood leaf transformation circuit}
        \State $pk_{\mathcal{L}^{s}_{LL}}, vk_{\mathcal{L}^{s}_{LL}} \leftarrow \mathsf{Setup}
        (1^\lambda, \mathcal{L}^{s}_{LL}, \theta_{\mathcal{L}^{s}_{LL}})$

        \Statex \LineComment{setup log likelihood scalar sum aggregation circuit}
        \State $pk_{\mathcal{A}^{l}_{sum}}, vk_{\mathcal{A}^{l}_{sum}} \leftarrow \mathsf{Setup}(1^\lambda, \mathcal{A}^{l}_{sum}, \theta_{\mathcal{A}^{l}_{sum}})$

        \Statex \LineComment{salted log likelihood \cosmetic build on full model}
        \State $\Psi^f, H^{root, f} \leftarrow$ \Call{CosmeticCROBuild}{$U, U^f, vk_{\mathcal{L}^{s}_{LL}}, vk_{\mathcal{A}^{l}_{sum}}$}

        \Statex \LineComment{salted log likelihood \cosmetic build on reduced model}
        \State $\Psi^r, H^{root, r} \leftarrow$ \Call{CosmeticCROBuild}{$U, U^r, vk_{\mathcal{L}^{s}_{LL}}, vk_{\mathcal{A}^{l}_{sum}}$}

        \Statex \LineComment{setup post-aggregation LRT statistic circuit}
        \State $pk_{LRT},vk_{LRT} \leftarrow \mathsf{Setup}(1^\lambda, \mathcal{G}_{LRT})$
        \State distribute $vk_{LRT}$ to all users in $U^f$ or $U^r$
        \State $\zeta \leftarrow$ \Call{LRTStatistic}{$\Psi^f, \Psi^r$}
        \State set $\Omega_{LRT} = \{\Psi^f, \Psi^r, \zeta\}$
        \State generate $\Pi_{LRT} = \mathsf{Prove}(pk_{LRT},\Omega_{LRT})$
        \State distribute $\Pi_{LRT}, pk_{LRT}$ publicly
    \EndFunction
 \end{algorithmic}
\end{algorithm}

\begin{algorithm}[htbp]
 \caption{Accuracy Computation}
 \label{alg:acc_algorithm}
 \begin{algorithmic}[1]
    \Function{Accuracy}{$U,U^{test}$}
    
        \LineComment{setup classification assessment leaf transformation circuit}
        \State $pk_{\mathcal{L}^{s}_{CA}}, vk_{\mathcal{L}^{s}_{CA}} \leftarrow \mathsf{Setup}
        (1^\lambda, \mathcal{L}^{s}_{CA}, \theta_{\mathcal{L}^{s}_{CA}})$

        \LineComment{setup classification assessment sum aggregation circuit}
        \State $pk_{\mathcal{A}^{l}_{sum}}, vk_{\mathcal{A}^{l}_{sum}} \leftarrow \mathsf{Setup}(1^\lambda, \mathcal{A}^{l}_{sum}, \theta_{\mathcal{A}^{l}_{sum}})$

        \LineComment{salted classification assessment \cosmetic on test sample}
        \State $\Psi, H^{root} \leftarrow$ \Call{CosmeticCROBuild}{$U, U^{test}, vk_{\mathcal{L}^{s}_{CA}}, vk_{\mathcal{A}^{l}_{sum}}$}

        \State $pk_{ACC},vk_{ACC} \leftarrow \mathsf{Setup}(1^\lambda, \mathcal{G}_{ACC})$
        \State distribute $vk_{ACC}$ to all users in $U^{test}$
        \State set $\Omega_{ACC} = \{\Psi\}$
        \State generate $\Pi_{ACC} = \mathsf{Prove}(pk_{ACC},\Omega_{ACC})$
        \State distribute $\Pi_{ACC}, pk_{ACC}$ publicly
    \EndFunction
 \end{algorithmic}
\end{algorithm}
\subsection{Logistic Accuracy (ACC) Test}\label{subsec:acc_test}
Finally, we evaluate logistic regression accuracy (ACC) under privacy-preserving computation (Algorithm~\ref{alg:acc_algorithm}). The classification assessment leaf transformation $\mathcal{L}^{s}_{CA}$, parameterized by the logistic regression coefficients $\theta_{\mathcal{L}^{s}_{CA}}$, returns 1 if the predicted class matches the true label and 0 otherwise; the aggregation function $\mathcal{A}^{l}_{sum}$ sums these values and divides by sample size to yield overall accuracy. The LTR verifier confirms correctness by checking the CSMT-based inclusion/exclusion proof for $u$ and the classifier-aggregation proof $\Pi_{CA}$ against $vk_{CA}$, returning success only if both pass and otherwise failure tagged with the failing check(s).
        
\section{Supplementary Results}\label{apx:sup_res}
\textit{Cryptographic Overhead and Key Size}: Proving and verification key sizes are constant across EZKL scales 8--14 for the KS, LR, and Acc tests, showing no scale-dependent cryptographic growth. LTR-class transformers converge to 11.27~GB/2.69~MB and MRP-class transformers to 10.29~GB/2.43~MB across all three tests; the sole exception is the Acc transformer, which rises modestly to 13.24~GB/3.07~MB due to its additional comparative logic.

\end{document}